\documentclass{article}
\usepackage{lipsum}
%\usepackage[hidelinks]{hyperref}
%\usetikzlibrary{intersections,shapes.arrows}
%\usepackage{commath}
\usepackage{xifthen}
\usepackage{listings}
\RequirePackage{amsthm,amsmath,amssymb}
\usepackage{natbib}
\usepackage{authblk}
\usepackage{float}
\usepackage{rotating}

\newcounter{num}
\setcounter{num}{1}

\makeatletter
\def\@maketitle{%
  \newpage
  \null
  \vskip 2em%
  \begin{center}%
  \let \footnote \thanks
    {\Large\bfseries \@title \par}%
    \vskip 1.5em%
    {\normalsize
      \lineskip .5em%
      \begin{tabular}[t]{c}%
        \@author
      \end{tabular}\par}%
    \vskip 1em%
    {\normalsize \@date}%
  \end{center}%
  \par
  \vskip 1.5em}
\makeatother

%\usepackage{environ}
%\NewEnviron{killcontents}{}
%\let\proof\killcontents
%\let\endproof\endkillcontents

%\newcommand{\doBlank}[1]{{\color{gray}#1}}
\newcommand{\doBlank}[1]{}

\marginparwidth 0pt
\oddsidemargin  0pt
\evensidemargin  0pt
\marginparsep 0pt

\topmargin   0pt

\textwidth   6.5in
\textheight  8.5 in

\def\?#1{}

%Write path here to MyJournal!

\newcommand{\dif}{\mathrm{d}}

\numberwithin{equation}{section}
\theoremstyle{plain}

\newtheorem{assumption}{Assumption}
\newtheorem{lemma}{Lemma}[section]
\newtheorem{definition}{Definition}[section]
\newtheorem{proposition}{Proposition}[section]
\newtheorem{corollary}{Corollary}[section]

\newtheorem{theorem}{Theorem}

\newcommand{\writetitle}{0}
\newcommand{\mytitle}[1]
{   \ifthenelse{\writetitle=1}{}{}
}

\newread\mysource
\setcounter{tocdepth}{5}
\begin{document}
\title{Ergodicity of Markov chain Monte Carlo with reversible proposal}
\author{Kengo KAMATANI%
\thanks{Supported in part by Grant-in-Aid for Young Scientists (B) 24740062 and CREST JST.}}
\affil{Graduate School of Engineering Science and Center for Mathematical Modeling and Data Science, Osaka University}

\date{Dated: \today}

\maketitle
\begin{abstract} 
We describe  ergodic properties of some Metropolis-Hastings (MH) algorithms for heavy-tailed target distributions. The analysis usually falls into  sub-geometric ergodicity framework but we prove that the mixed preconditioned Crank-Nicolson (MpCN) algorithm has geometric ergodicity even for heavy-tailed target distributions. 
This useful property  comes from the fact that the MpCN algorithm becomes a random-walk Metropolis algorithm under suitable transformation. 
\end{abstract}
{\bf Keywords:} Markov chain; Ergodicity; Monte Carlo; Regular variation; \\
{\bf MSC2010:} 65C05; 65C40; 60J05 

\section{Introduction}

In Bayesian analysis, direct calculation of integral is usually quite difficult  especially  for high-dimension and/or heavy-tailed target distributions. 
Markov chain Monte Carlo (MCMC) methods such as Metropolis-Hastings (MH) algorithm provides a useful recipe for the approximation of the integral.  

%We now come to the framework of this paper. Let $P$ be a transition kernel and let $\Pi$ be a probability measure on a measurable space $(E,\mathcal{E})$. 
%A key notion is reversibility. A transition kernel $P$ is $\Pi$-reversible if
%$\Pi(\dif x)P(x,\dif y)=\Pi(\dif y)P(y,\dif x)$. 
%
%The Metropolis-Hastings algorithm is a clever way to construct a $\Pi$-reversible transition kernel $P$. Let $Q(x,\dif y)$ be a transition kernel with measurable density $q(x,y)$, and let $\pi$ be the density of $\Pi$. The first step of the Metropolis-Hastings algorithm is to choose a state $x\in E$. Then iterate the following:
%\begin{itemize}
%	\item[i)] Generate $x^*$ from $Q(x,\dif y)$. 
%	\item[ii)] Flip  a coin with head probability $\alpha(x,x^*)$ where 
%	\begin{align*}
%		\alpha(x,y)=\left\{\begin{array}{cc}\min\left\{1,\frac{\pi(y)q(y,x)}{\pi(x)q(x,y)}\right\}&\mathrm{if}\ \pi(x)q(x,y)>0,\\
%		1&\mathrm{otherwise}\end{array}\right. 
%	\end{align*} 
%	\item[iii)] If the coin is head, set $x\leftarrow x^*$. If the coin is tail, do nothing. Repeat. 
%\end{itemize} 
%By this procedure, a Markov chain will be generated. 
%%The transition kernel $P$ is defined by
%%\begin{align*}
%%	P(x,\dif y)=Q(x,\dif y)\alpha(x,y)+\delta_x(\dif y)\left\{\int_EQ(x,\dif z)\alpha(x,z)\right\}. 
%%\end{align*}
%Then the transition kernel $P$ of the Markov chain is $\Pi$-reversible. % since $Q(x,\dif y)\alpha(x,y)$ has the property. 
%We call $P$ the MH kernel, and $Q$ the proposal kernel of the MH kernel. 

%We study ergodic properties of transition kernels of MCMC for heavy-tailed target distributions in $\mathbb{R}^d$. 
Ergodic properties for heavy-tailed case were handled mostly by sub-geometric drift condition (see e.g. \citet{MR1285459,MR1890063,MR1947960,FM1})
since most MCMC do not satisfy geometric drift condition. Application of sub-geometric drift condition to MCMC includes \citet{FM2, MR1996270, FM1,MR2396939}, 
and \citet{MR2648752}. 
%As shown in \citet{MR1996270}, famous random-walk Metropolis does not satisfy geometric drift condition for heavy-tailed case. 
On the other hand some other MCMC can be geometrically ergodic. This includes independent sampler and position dependent variance MH algorithm on $\mathbb{R}$
such as \citet{1008.5227, 1507.05780}. Note that independent sampler is very sensitive for the choice of the proposal distribution, and position dependent methods have difficulty in high-dimension which may negatively affect ergodic properties. % due to curse of dimensionality of MH algorithms. See Section \ref{sec:discussion}. 

In this paper, we consider geometric ergodicity  for multidimensional heavy-tailed and light-tailed target distributions. 
Recently the mixed preconditioned Crank-Nicolson (MpCN) algorithm was considered in \cite{arXiv:1412.6231}. The method  has good high-dimensional properties even for heavy-tailed target distributions.  As the number of dimension $d\rightarrow\infty$, the number of iteration until convergence is $O(d)$ whereas the random-walk Metropolis one is $O(d^2)$. 
%In this paper we study ergodic properties in terms of regular varying functions. 
To prove ergodicity, we provide  the key property, the \textbf{random-walk Metropolis} property for the MpCN kernel: The MpCN kernel becomes a random-walk Metropolis kernel under suitable transformation. Thus MpCN is considered to be an extreme case of variable transformation methods (see \citet{MR2556773} and \citet{MR3097969}). 
By using this fact, it is rather straightforward to show geometric ergodicity for fairly general class of target distributions in $\mathbb{R}^d$. % such as  heavy-tailed distributions in $\mathbb{R}^d$.  %To the best of our knowledge, it has never been shown in the literature. 

The main result is summarized in the next theorem. The formal definition of the MpCN kernel is in Section \ref{sec:mhkernels} and the proof is deferred to Sections \ref{sec:regular} and \ref{sec:rapid}.  

\begin{theorem}
The MpCN kernel is geometrically ergodic for the target probability distribution $\Pi(\dif x)=\pi(x)\dif x$ on $\mathbb{R}^d$ such that 
\begin{description}
\item[Heavy-tailed class]
 $\pi(x)$ is strictly positive continuous function such that 
\begin{align*}
	\lim_{r\rightarrow\infty}\frac{\pi(rx)}{\pi(r1)}=\|x\|^{-\alpha}
\end{align*}
for some $\alpha>d$ where the above convergence is locally uniform in $x$. 
\item[Light-tailed class]
$\pi(x)$ is strictly positive differentiable  function such that 
\begin{align*}
	\lim_{r\rightarrow\infty}\frac{\pi(rsx)}{\pi(rx)}=
	\left\{\begin{array}{cc}
	0&\mathrm{if}\ 1<s\\
	+\infty&\mathrm{if}\ 1>s
	\end{array}
\right.
\end{align*}
for any $x\neq 0$, and satisfies a curvature condition 
	\begin{align*}
		\limsup_{x\rightarrow\infty}\left\langle\frac{x}{\|x\|},\frac{\nabla\log\pi(x)}{\|\nabla\log\pi(x)\|}\right\rangle <0
	\end{align*}
	\end{description}
\end{theorem}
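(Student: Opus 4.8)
The plan is to cash in the \emph{random-walk Metropolis property} announced in the introduction. By Section~\ref{sec:mhkernels} there is a fixed diffeomorphism $T\colon\mathbb{R}^{d}\setminus\{0\}\to\mathbb{R}\times S^{d-1}$, the log-radial coordinates $x=e^{t}\theta$ with $t=\log\|x\|$ and $\theta=x/\|x\|$, such that for \emph{every} target $\Pi$ the MpCN kernel is conjugate through $T$ to a random-walk Metropolis kernel $Q$ for the push-forward target $\widetilde\Pi=T_{*}\Pi$ on the cylinder, the increment of $Q$ being one fixed symmetric law (the spherical Gaussian scale mixture carried by the MpCN auxiliary variable) whose $t$-marginal has super-exponentially light tails and whose density is continuous and bounded below near the origin; this is exactly why MpCN ``adapts to the current radial scale''. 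Geometric ergodicity passes through such a conjugation: the MpCN chain started inside $\operatorname{supp}\Pi$ a.s.\ never reaches the excised point $0$, total variation distance is invariant under the bijection $T$, and for $Q$ the strict positivity and continuity of $\widetilde\pi$ and of the proposal density give $\psi$-irreducibility, aperiodicity, and the smallness of compact sets. Hence it suffices to check that $\widetilde\Pi$ lies in the class of targets for which the classical random-walk Metropolis drift criteria yield geometric ergodicity; this is what Sections~\ref{sec:regular} and~\ref{sec:rapid} carry out for the two cases.

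The key computation is the form of $\widetilde\pi$. In the chart, $\widetilde\pi(t,\theta)=\pi(e^{t}\theta)\,e^{dt}$, the $e^{dt}$ being the Jacobian of the polar/log change of variables. For the heavy-tailed class I would feed the locally uniform convergence $\pi(rx)/\pi(r\mathbf{1})\to\|x\|^{-\alpha}$ into the uniform convergence theorem for regularly varying functions to get $\pi(e^{t}\theta)=e^{-\alpha t}\ell(e^{t})\,(1+o(1))$ as $t\to+\infty$, with $\ell$ slowly varying and the $o(1)$ uniform in $\theta\in S^{d-1}$ by compactness. Then $\widetilde\pi(t,\theta)=e^{-(\alpha-d)t}\ell(e^{t})\,(1+o(1))$: because $\alpha>d$ this has an exponentially light right tail in $t$ (rate $\alpha-d>0$), while continuity and strict positivity of $\pi$ together with $e^{dt}\to0$ make $\widetilde\pi$ integrable with a light left tail, and $\nabla\log\widetilde\pi$ is asymptotically $-(\alpha-d)\,\partial_{t}$, so the level sets are asymptotically the spheres $\{t=\mathrm{const}\}$. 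This is exactly the exponential-tail regime (drift function $V=e^{st}$ for $0<s<\alpha-d$, using that the $t$-increment of $Q$ has all exponential moments) in which random-walk Metropolis is geometrically ergodic, which closes the heavy-tailed case.

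For the light-tailed class the point is that the two displayed limits say precisely that $r\mapsto\pi(r\theta)$ is rapidly varying, and composing with $\exp$ turns this into \emph{super-exponential decay} of $\widetilde\pi(\cdot,\theta)$ in $t$, uniformly on the sphere (e.g.\ $\pi\sim e^{-c\|x\|}$ gives $\widetilde\pi\sim e^{-ce^{t}}$); the Jacobian factor $e^{dt}$ is harmless against it. Simultaneously the curvature condition $\limsup_{\|x\|\to\infty}\langle x/\|x\|,\nabla\log\pi(x)/\|\nabla\log\pi(x)\|\rangle<0$ transports to the contour condition for $\widetilde\pi$ on the cylinder, since in the new coordinates $\nabla\log\widetilde\pi$ differs from the transported $\nabla\log\pi$ only by the bounded correction $d\,\partial_{t}$, negligible against a super-exponentially small density. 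The standard geometric-ergodicity theorem for random-walk Metropolis with a super-exponentially decaying target satisfying the curvature/contour condition then applies to $Q$, and transports back to the MpCN kernel.

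The step I expect to be the real obstacle is not the two tail analyses — essentially bookkeeping with regular/rapid variation and the chain rule — but the clean statement and proof of the conjugacy: one must verify that, after the MpCN auxiliary scale variable is integrated out, the induced transition on $x$ written in log-radial coordinates is genuinely of random-walk Metropolis type with a single fixed symmetric increment law, and that this law has the regularity demanded by the cited criteria (continuous positive density, bounded below on a neighbourhood of $0$, and the exponential moments needed for the drift estimate); and one must confirm that excising the point $0$ costs nothing, i.e.\ that the MpCN chain is supported on $\mathbb{R}^{d}\setminus\{0\}$ and that irreducibility and the minorization of compact sets are not lost at the removed boundary. Once this conjugacy is set up, the two cases are short.
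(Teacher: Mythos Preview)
Your plan rests on a conjugacy that is stronger than what actually holds. The random-walk Metropolis property established in the paper (Proposition~\ref{prop:rwp}) is only with respect to the \emph{one-dimensional} projection $\Psi(x)=\log(\|x\|^{2})$: the law of $\xi=\log(\|x^{*}\|^{2}/\|x\|^{2})$ is symmetric and independent of $x$. It is \emph{not} true that the full MpCN proposal becomes a random-walk Metropolis kernel on the cylinder $\mathbb{R}\times S^{d-1}$ under the log-radial chart. Writing $x=e^{t}\theta$, one has $x^{*}=e^{t}\bigl(\rho^{1/2}\theta+(1-\rho)^{1/2}w/\|\tilde w\|\bigr)$, so indeed $t^{*}-t$ has a fixed law, but the conditional law of $\theta^{*}$ still depends on $\theta$, and there is no translation structure on $S^{d-1}$ for $d\neq 2$ that would let you call this a random walk. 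Consequently you cannot invoke the Jarner--Hansen theorem for RWM on $\mathbb{R}^{d}$ as a black box; that theorem lives on Euclidean space with additive increments, not on a cylinder with a rotation-equivariant angular component. (A smaller slip: the radial increment $\xi$ has only \emph{exponential} tails---the ratio $\|w\|/\|\tilde w\|$ is $F$-type and polynomial---not super-exponential ones, though this would not by itself break the heavy-tailed argument.)

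The paper avoids this obstruction by never attempting a full conjugation. It works directly on $\mathbb{R}^{d}$ with the drift function $V(x)=\bigl(\pi(x)\|x\|^{d}\bigr)^{-s}$ and shows $\limsup_{\|x\|\to\infty}(PV-V)/V<0$ (and likewise as $x\to0$). The one-dimensional symmetry of $\xi$ enters only at the end of the computation: in the heavy-tailed case, regular variation makes $q(x^{*})/q(x)$ converge in law to $e^{(d-\alpha)\xi/2}$, and the drift limit becomes $\mathbb{E}\bigl[(e^{-s\eta}-1)\min\{1,e^{\eta}\}\bigr]$ with $\eta=(d-\alpha)\xi/2$ symmetric, which is then shown negative by a short algebraic rearrangement using the symmetry. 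In the light-tailed case, rapid variation forces $|\log q(x^{*})/q(x)|\to\infty$ in probability, so the drift reduces to $-\liminf_{x}\mathbb{E}_{x}[\min\{1,q(x^{*})/q(x)\}]=\mathrm{ess\,sup}\,P(x,\{x\})-1$, and the curvature condition is used---by an argument in the spirit of Jarner--Hansen but carried out by hand for this specific proposal---to keep the rejection probability bounded away from $1$. Your tail analyses are essentially the right computations, but they need to be fed into this direct drift argument rather than into a conjugation that does not exist in the form you assume.
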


The heavy-tail class includes (a) polynomial target densities considered in \citet{MR2396939} (Section 3.3), and 
the light-tailed class includes (b) super-exponential densities in \citet{MR1731030} (Section 4) and (c) exponential densities in \citet{FM2} (Assumption D). Note that the random-walk Metropolis algorithm is geometrically ergodic only for (b) (Theorem 4.3 of \citet{MR1731030}).

The rest of the paper is organized as follows. In Section \ref{sec:mpcn} MpCN algorithm is introduced as a MH kernel with reversible proposal. In this section, the  random-walk Metropolis property is defined and proved that the MpCN kernel has the property. Section \ref{sec:ergodicity} provides ergodic properties of MpCN kernel. 
%Concluding remarks are contained in Section \ref{sec:discussion}. 

We finish the section with notation that will be used through the paper. $N_d(\mu,\Sigma)$ is the $d$-dimensional normal distribution with mean $\mu$ and variance covariance matrix $\Sigma$, and $\phi_d(x)$ is the density of $N_d(0,I_d)$ where $I_d$ is the $d\times d$ identity matrix. 
$\mathcal{L}(X)$ is the law of the random variable $X$. 

\section{The MpCN kernel}\label{sec:mpcn}

In this section we describe the mixed preconditioned Crank-Nicolson (MpCN) algorithm
as an MH kernel with reversible proposal kernel. For general background on 
Markov chain we refer to \citet{N} and \citet{MR2509253} and 
 MCMC to \citet{TierneyAOS94}, and \citet{handbookmcmc}.

\subsection{Metropolis-Hastings kernels with reversible proposals}\label{sec:mhkernels}

Let $(E,\mathcal{E})$ be a measurable space and let $P$ be a (probability) transition kernel and $\Pi(\dif x)$ be a probability measure. 
The transition kernel $P$ is called $\nu$-\textbf{reversible}
if $\nu(\dif x)P(x,\dif y)=\nu(\dif y)P(y,\dif x)$ for a $\sigma$-finite measure $\nu$. 
Let $\tilde{\Pi}$ be a $\sigma$-finite measure on $(E,\mathcal{E})$, 
and $\pi$ and $\tilde{\pi}$ be the densities of $\Pi$ and $\tilde{\Pi}$ with respect to a $\sigma$-finite measure. If transition kernel $Q$ is $\tilde{\Pi}$-reversible, 
\textbf{Metropolis-Hastings (MH) kernel} $P$ (with reversible proposal) is defined by 
\begin{align*}
	P(x,\dif y)=Q(x,\dif y)\alpha(x,y)+\delta_x(\dif y)\left(1-\int_{z\in E}Q(x,\dif z)\alpha(x,z)\right)
\end{align*}
where 
\begin{align*}
	\alpha(x,y)=\min\left\{1,\frac{\pi(y)\tilde{\pi}(x)}{\pi(x)\tilde{\pi}(y)}\right\}.
\end{align*}
We call $Q$ the proposal kernel of $P$. MH kernel is $\Pi$-reversible. 

In this paper, three MH kernels on Euclidean space will be studied.  
Assume $d\ge 2$. Let $S^{d-1}$ be the unit sphere in $\mathbb{R}^d$ defined by $\|x\|=1$ where $\langle\cdot,\cdot\rangle$ is the Euclidean inner product and 
 $\|x\|=\langle x,x\rangle^{1/2}$. 
A probability measure $\Gamma$ on $\mathbb{R}^d$ is called symmetric about the origin if $\Gamma(A)=\Gamma(-A)$ for any Borel set $A$ where $-A=\{-x;x\in A\}$.

\begin{definition}[RWM kernel]
The random-walk Metropolis kernel uses $Q(x,\dif x^*)=\Gamma(\dif x^*-x)$ where 
the probability distribution $\Gamma$ is  symmetric about the origin.
In this case $\tilde{\Pi}$ is the Lebesgue measure. 
Its ergodic properties were studied in \citet{MT2, RT, MR1731030}
and \citet{FM2}. 
%High-dimensional property was studied in \citet{MR1428751}. 
%\begin{itemize}
%\item Generate $x^*=x+w$ where $w$ follows a  
%\item Accept $x^*$ as $x$ with probability $\alpha(x,x^*)$, and otherwise,  discard $x^*$, where
%$
%\alpha(x,y)=\min\left\{1, \frac{p(y)}{p(x)}\right\}.$
%%\end{equation}
%\end{itemize}
\end{definition}

\begin{definition}[pCN kernel]
The preconditioned Crank Nicolson (pCN) kernel (\citet{000259234400002n.d.}) uses 
\begin{align*}
	x^*\leftarrow \rho^{1/2} x+(1-\rho)^{1/2}w,\ w\sim N_d(0,I_d)
\end{align*}
as the proposal kernel. In this case $\tilde{\Pi}$ is the standard normal distribution. This method has mainly been studied by high-dimensional analysis, see e.g., \citet{MR3262508}.
%
%\item Generate $x^*= \rho^{1/2} x+(1-\rho)^{1/2}w$ where $w$ follows the standard normal distribution. 
%\item Accept $x^*$ as $x$ with probability $\alpha(x,x^*)$, and otherwise,  discard $x^*$, where
%%\begin{equation}\nonumber
%$
%$\alpha(x,y)=\min\left\{1, \frac{p(y)\phi_d(x)}{p(x)\phi_d(y)}\right\}$.
%%\end{equation}
%\end{itemize}
%Its high-dimensional properties have been studied such as \citet{MR2537193,MR3262508, arXiv:1108.1494v2, MR3161650}
%and \citet{arXiv:1412.6231}. 
%The pCN kernel were generalized by \citet{MR3162309, 2014arXiv1411.3688C} and  \citet{2015arXiv150403461R}
%to fit general normal distribution.
\end{definition}

To obtain a better mixing property, we consider scale mixture version of the pCN kernel. See \citet{arXiv:1412.6231} for more background and high-dimensional asymptotic theory. 

\begin{definition}[MpCN kernel] 
The mixed pCN (MpCN) kernel  (\citet{arXiv:1412.6231}) uses
\begin{align*}
&r\sim \mathrm{Gamma}(d/2,\|x\|^2/2),\ w\sim N_d(0,I_d), \mathrm{and}\\
&x^*\leftarrow \rho^{1/2} x+(1-\rho)^{1/2} r^{-1/2}w. 
\end{align*}
as the proposal kernel. 
In this case $\tilde{\Pi}(\dif x)=\|x\|^{-d}\dif x$. 
%\begin{itemize}
%\item Generate $r\sim \mathrm{Gamma}(d/2,\|x\|^2/2)$. 
%\item Generate $x^*= \rho^{1/2} x+(1-\rho)^{1/2} r^{-1/2}w$ where $w$ follows the standard normal distribution. 
%\item Accept $x^*$ as $x$ with probability $\alpha(x,x^*)$, and otherwise,  discard $x^*$, where
%\begin{equation}\nonumber
%$\alpha(x,y)=\min\left\{1, \frac{p(y)\|x\|^{-d}}{p(x)\|y\|^{-d}}\right\}. $
%\end{equation}
%\end{itemize}
%Its high-dimensional properties was studied in \citet{arXiv:1412.6231}. 
\end{definition}

In the above, $\mathrm{Gamma}(\nu,\alpha)$ is the Gamma distribution with the shape parameter $\nu$
and the scale parameter $\alpha$ with the probability distribution function $\propto x^{\nu-1}\exp(-\alpha x)$. 
We usually set $\rho=0.8$. 
Obviously, the proposal kernels for RWM and pCN are reversible
and it is also true for MpCN kernel. See Lemma 2.1 of \citet{arXiv:1412.6231} for the proof. 
%\begin{proposition}
%The proposal kernel of the MpCN kernel is $\tilde{\Pi}$-reversible
%where $\tilde{\Pi}(\dif x)=\|x\|^{-d}\dif x$. 
%\end{proposition}

Since $\|\tilde{w}\|^2$ follows the chi-squared distribution $\mathrm{Gamma}(d/2,1/2)$, we have another useful expression
\begin{align}\label{eq:xstar}
	x^*\leftarrow \rho^{1/2}x+(1-\rho)^{1/2}\|x\|\frac{w}{\|\tilde{w}\|}
\end{align}
for the proposal of the MpCN kernel, where $w, \tilde{w}\sim N_d(0,I_d)$ are independent. 
By this notation, 
\begin{align}\label{eq:mpcn_expression}
\|x^*\|^2=\|x\|^2\left(\rho+2\sqrt{\rho(1-\rho)}\frac{\|w\|}{\|\tilde{w}\|}v+(1-\rho)\frac{\|w\|^2}{\|\tilde{w}\|^2}\right)
\end{align}
where $v=\langle w/\|w\|,x/\|x\|\rangle$. The law of $v$ is the first element of the uniform distribution on $S^{d-1}$ and it is independent from $\|w\|$ and $\|\tilde{w}\|$.
Therefore the law of $\|x^*\|/\|x\|$ does not depend on $x$. In particular, the law of 
\begin{align}\label{eq:xi}
	\xi(x)=\log(\|x^*\|^2)-\log(\|x\|^2)=\log\left(\frac{\|x^*\|^2}{\|x\|^2}\right)
	\end{align}
does not depend on $x$. 

\subsection{Random walk property} 
In this section we will present the random-walk Metropolis (RWM) property of the MpCN kernel which is the key for the proof of ergodicity in Section \ref{sec:ergodicity}. 
Let $\Psi^{-1}A=\{x; \Psi(x)\in A\}$. 

\begin{definition}
A transition kernel  $P$ on $(E,\mathcal{E})$ has the \textbf{random-walk property}
with respect to $\Psi:E\rightarrow \mathbb{R}^d$ if  there exists a probability distribution $\Gamma$ which is symmetric about the origin such that
\begin{align*}
	P(y,\Psi^{-1}A)=\Gamma(A-x)
\end{align*}
for all $x\in\mathbb{R}^d, y\in E$ and Borel set $A$ such that $\Psi(y)=x$. 
The MH kernel $P$ has the \textbf{RWM property} with respect to $\Psi$
if its proposal kernel $Q$ has the random-walk property with respect to $\Psi$. 
\end{definition}

 A few methods with this property have been proposed in the literature, including multiplicative random walk in \cite{MR2001384} and transformation method in 
\citet{MR3097969}. %The MpCN have the property in Euclidean space. 

\begin{proposition}\label{prop:rwp}
The law of $\xi(x)$ in (\ref{eq:xi}) is symmetric about the origin and does not depend on $x$. 
In particular, the MpCN kernel has the RWM property with respect to $\Psi(x)=\log(\|x\|^2)$. 
\end{proposition}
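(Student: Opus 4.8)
The plan is to establish the two claimed properties of $\xi(x)$ — symmetry about the origin and independence of $x$ — and then observe that these are exactly what the RWM property with respect to $\Psi(x)=\log(\|x\|^2)$ demands. Independence of $x$ has essentially already been argued in the text preceding the proposition: starting from the expression \eqref{eq:mpcn_expression}, the ratio $\|x^*\|^2/\|x\|^2$ equals $\rho+2\sqrt{\rho(1-\rho)}(\|w\|/\|\tilde w\|)v+(1-\rho)\|w\|^2/\|\tilde w\|^2$, and the joint law of $(\|w\|,\|\tilde w\|,v)$ does not depend on $x$ because $v=\langle w/\|w\|, x/\|x\|\rangle$ is the inner product of an isotropic direction with a fixed unit vector, hence uniform on $[-1,1]$ (the first coordinate of the uniform law on $S^{d-1}$) and independent of the radial parts $\|w\|,\|\tilde w\|$. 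So I would just restate that the law of $\xi(x)=\log(\|x^*\|^2/\|x\|^2)$ is a fixed distribution $\Gamma$ on $\mathbb{R}$, call it $\Gamma$, computed from $(\|w\|,\|\tilde w\|,v)$.

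For symmetry, the key point is the exchangeability $w\leftrightarrow\tilde w$: since $w$ and $\tilde w$ are i.i.d.\ $N_d(0,I_d)$, the pair $(w,\tilde w)$ has the same law as $(\tilde w, w)$, and this swap sends $\|x^*\|^2/\|x\|^2$ to its reciprocal. Concretely I would argue that $\xi(x)$ and $-\xi(x)$ have the same law. One clean way: from \eqref{eq:xstar}, writing $u=w/\|\tilde w\|$, we have $x^*=\rho^{1/2}x+(1-\rho)^{1/2}\|x\|u$, and I want to exhibit a measure-preserving map on the randomness that inverts the ratio. The reversibility of the MpCN proposal $Q$ with respect to $\tilde\Pi(\dif x)=\|x\|^{-d}\dif x$ (Lemma 2.1 of \citet{arXiv:1412.6231}, quoted above) gives $\tilde\Pi(\dif x)Q(x,\dif x^*)=\tilde\Pi(\dif x^*)Q(x^*,\dif x)$; pushing this forward under $(x,x^*)\mapsto(\log\|x\|^2,\log\|x^*\|^2)$ and using that $\tilde\Pi$ is exactly the measure that becomes Lebesgue measure in the coordinate $t=\log\|x\|^2$ (since $\|x\|^{-d}\dif x$ in polar coordinates is $r^{-1}\,\dif r\times(\text{uniform on }S^{d-1})$, and $t=2\log r$ has $\dif t = 2\,\dif r/r$), shows that the law of the increment $t^*-t$ is symmetric — a $\sigma$-finite reversible kernel that is translation-invariant in a coordinate has symmetric increments. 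Either the direct $w\leftrightarrow\tilde w$ computation or this reversibility argument works; I would present the swap argument as the main line and mention the reversibility interpretation as a remark.

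Finally, I would assemble the conclusion: define $\Gamma$ to be the law of $\xi := \xi(x)$ on $\mathbb{R}^d=\mathbb{R}$ (here $d=1$ in the target space of $\Psi$, i.e.\ $\Psi:\mathbb{R}^d\to\mathbb{R}$). For $y\in\mathbb{R}^d$ with $\Psi(y)=x=\log\|y\|^2$, the proposal $x^*$ from $y$ satisfies $\Psi(x^*)=\log\|x^*\|^2 = \log\|y\|^2 + \xi = x+\xi$, so $Q(y,\Psi^{-1}A)=\mathbb{P}(x+\xi\in A)=\Gamma(A-x)$, which is the random-walk property of the proposal kernel $Q$ with respect to $\Psi$; by definition this is precisely the RWM property of the MH kernel $P$. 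The main obstacle is purely the symmetry claim — getting the bookkeeping of the $w\leftrightarrow\tilde w$ swap (or equivalently the change of variables turning $\tilde\Pi$-reversibility into symmetry of the log-increment) right — everything else is a matter of citing the earlier computations and unwinding definitions.
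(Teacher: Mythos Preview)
Your main ``swap'' argument has a gap: exchanging $w\leftrightarrow\tilde w$ does \emph{not} send $\|x^*\|^2/\|x\|^2$ to its reciprocal. Writing $e=x/\|x\|$, the ratio is $R=\bigl\|\rho^{1/2}e+(1-\rho)^{1/2}w/\|\tilde w\|\bigr\|^2$; after the swap one obtains $R'=\bigl\|\rho^{1/2}e+(1-\rho)^{1/2}\tilde w/\|w\|\bigr\|^2$, and $R'\neq 1/R$ in general (try $\rho=1/2$ with scalar values $w=1$, $\tilde w=2$: then $R=9/8$ and $R'=9/2$). The swap only yields $\mathcal L(R')=\mathcal L(R)$, which is tautological and says nothing about symmetry of $\log R$. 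What the paper does instead is first substitute $x=\tilde w$ (legitimate since $\mathcal L(\xi(x))$ is free of $x$), so that $\xi(\tilde w)=\log\|\rho^{1/2}\tilde w+(1-\rho)^{1/2}w\|^2-\log\|\tilde w\|^2$, and then invokes the exchangeability of the Gaussian pair $\bigl(\tilde w,\ \rho^{1/2}\tilde w+(1-\rho)^{1/2}w\bigr)$ --- two $N_d(0,I_d)$ vectors with correlation $\rho^{1/2}$ --- which interchanges the two logarithms and flips the sign of $\xi$. The relevant exchangeability is of $\tilde w$ with $\rho^{1/2}\tilde w+(1-\rho)^{1/2}w$, not of $w$ with $\tilde w$.

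Your alternative route via $\tilde\Pi$-reversibility of $Q$, on the other hand, is correct and is a genuinely different argument from the paper's. Once you have (i) that $Q(y,\Psi^{-1}\cdot)$ depends on $y$ only through $\Psi(y)$ and is a fixed translation $\Gamma(\cdot-\Psi(y))$, and (ii) that the pushforward of $\tilde\Pi(\dif x)=\|x\|^{-d}\dif x$ under $\Psi$ is Lebesgue on $\mathbb R$, reversibility of $Q$ with respect to $\tilde\Pi$ becomes Lebesgue-reversibility of the kernel $\Gamma(\dif t^*-t)$ on $\mathbb R$, which forces $\Gamma$ to be symmetric. This is a cleaner structural explanation than the explicit exchangeability computation, and it makes transparent \emph{why} the RWM property holds: it is precisely reversibility of the proposal with respect to the reference measure, read in the coordinate that renders that measure uniform. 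If you make this your main line you are done; just drop the swap claim.
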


\begin{proof}
By expression (\ref{eq:mpcn_expression}), the law of $\xi(x)$ and  $\xi(\tilde{w})$ are the same as described above. Note that
\begin{align*}
\xi(\tilde{w})=\log\left(\|\rho^{1/2}\tilde{w}+(1-\rho)^{1/2}w\|^2\right)-\log\left(\|\tilde{w}\|^2\right).
\end{align*}
Moreover,  there exists exchangeability  $\mathcal{L}(\tilde{w},\rho^{1/2}\tilde{w}+(1-\rho)^{1/2}w)=\mathcal{L}(\rho^{1/2}\tilde{w}+(1-\rho)^{1/2}w,\tilde{w})$. 
Therefor the law of $\xi(\tilde{w})$ is symmetric about the origin since
\begin{align*}
	\mathbb{P}_x(\xi(\tilde{\omega})>t)&=\mathbb{P}_x(\log\|\rho^{1/2}\tilde{w}+(1-\rho)^{1/2}w\|^2-\log\|\tilde{w}\|^2>t)\\
	&=\mathbb{P}_x(\log\|\tilde{w}\|^2-\log\|\rho^{1/2}\tilde{w}+(1-\rho)^{1/2}w\|^2>t)=\mathbb{P}_x(\xi(\tilde{\omega})<-t)\ (t\in\mathbb{R}). 
\end{align*}
 Thus the claim holds by putting $\Gamma=\mathcal{L}(\xi(\tilde{w}))=\mathcal{L}(\log\|\rho^{1/2}\tilde{w}+(1-\rho)^{1/2}w\|^2-\log\|\tilde{w}\|^2)$. % where $w, \tilde{w}$ are independent draws from $N_d(0, I_d)$. 
\end{proof}

\section{Ergodicity}\label{sec:ergodicity}
We have introduced the MpCN kernel in Section \ref{sec:mpcn} as an extension to the pCN kernel and showed it to have the RWM property. For this reason, the ergodic properties of the MpCN kernel can be derived in the same way as that of the RWM kernel. We consider heavy-tailed target distributions in Section \ref{sec:regular} and light-tailed target distributions in Section \ref{sec:rapid}. We prepare  Section \ref{sec:necessary} for necessary condition for geometric ergodicity.   We will conclude that unlike the RWM and pCN kernels, the MpCN kernel is geometrically ergodic for very wide class of  target distributions. 

I will begin by reviewing a few elementary properties of transition kernels. Our notation and terminologies generally follow those of \cite{MR2509253}. 
Let $P(x,\dif y)$ be a transition kernel on a measurable space $(E,\mathcal{E})$. 
We define
\begin{align*}
	Ph(x)=\int_y P(x,\dif y)h(y),\ (\nu P)(\dif y)=\int_x\nu(\dif x)P(x,\dif y)
\end{align*} 
for any measurable function $h(x)$ and signed measure $\nu$ if the right-hand side exists. 
A probability measure $\Pi$ is called the \textbf{invariant probability measure} if $\Pi P=\Pi$
and $P$ is called $\Pi$-invariant. 
%Throughout in this section,  we assume the existence of the invariant probability measure $\Pi$. 
Let $P^0(x,\dif y)=I(x,\dif y):=\delta_x(\dif y)$
and $P^{k+1}(x,\dif y)=\int_zP(x,\dif z)P^k(z,\dif y)\ (k\ge 0)$. The kernel $P$ is called $\Pi$-\textbf{irreducible} if $\Pi$ is absolutely continuous with respect to $\sum_{k=1}^\infty P^k(x,\cdot)$ for any $x\in E$. 
A set $C\in\mathcal{E}$ is called \textbf{small set} if 
	\begin{align}\label{eq:smallset}
	P^k(x,\cdot)\ge \delta\nu\ (x\in C)
\end{align}
for some $k\in\mathbb{N}$, $\delta\in (0,1)$, and a probability measure $\nu$. 
We require usual assumptions throughout in this paper: (a) $P$ is  $\Pi$-irreducible (b) $P$ is $\Pi$-invariant (c) there exists a small set $C\in\mathcal{E}$ such that $\Pi(C)>0$ and (d) $\Pi$ is not singular, that is, $\Pi(\{x\})<1$ for $x\in E$. Note that if $\mathcal{E}$ is countably generated, (c) comes from (a) (Proposition 2.6 of \cite{N}). 

Let $V:E\rightarrow [1,\infty]$ be a function such that $V(x)<\infty$ for $\Pi$-a.s.
The transition kernel $P$ is said to have the \textbf{geometric drift condition} if there is a small set $C$, $\gamma\in (0,1)$ and $b<\infty$
\begin{align}\label{eq:drift_condition}
	PV\le \gamma V+b1_C.
\end{align}
 The condition is extensively studied in the past few decades. In particular, if the above condition is satisfied, and also there exists a small set  that satisfies (\ref{eq:smallset}) for $k=1$, then $P$ is \textbf{geometrically ergodic}, that is 
\begin{align*}
	\|P^n-\Pi\|_{V}\le c\gamma^nV(x)
\end{align*}
where $\|\nu\|_{V}=\sup_{f:|f|\le V}|\int f(x)\nu(\dif x)|$ for a signed measure $\nu$
and $c$ is a constant (See Theorem 15.0.1 of \citet{MR2509253}). Moreover, geometric ergodicity implies geometric drift condition if the conditions (a) and (c) are satisfied (See Theorem 16.0.1 of \citet{MR2509253}).

\subsection{Necessary condition for ergodicity}\label{sec:necessary}
In this section we introduce necessary condition for geometric ergodicity for random-walk type kernels (RWM and MpCN) and MH kernel with ergodic proposal kernel (pCN).  
Let $(E,d)$ be a pseudometric space, that is, $d(x,y)\ge 0, d(x,y)=d(y,x)$
and $d(x,z)\le d(x,y)+d(y,z)$. Let $B_r(x)=\{y;d(x,y)<r\}$. Let $\mathcal{E}$ be its Borel $\sigma$-algebra generated by the pseudometric topology. 
Fix $x^*\in E$. 
The RWM kernel, and the MpCN kernel after transform $\Psi$ satisfy the following property. 

\begin{assumption}\label{ass:rwm}
	For any $\epsilon>0$, there exists $r>0$ such that 
	$P(x, B_r(x))>1-\epsilon$ for any $x\in E$. 
\end{assumption}

The following proposition,  due to \cite{MR1996270}, gives necessary condition for ergodicity. This says that if $P$ is geometrically ergodic, the target distribution has exponential tail. We give a proof for the sake of convenience of the reader.

\begin{proposition}[\cite{MR1996270}]\label{pro:integrable}
Assume Assumption \ref{ass:rwm}. 
	If $P$ satisfies geometric ergodicity then there exists $\delta>0$ such that
	\begin{align*}
		\int_{x\in E} \exp(\delta d(x^*,x))\Pi(\dif x)<\infty.
	\end{align*}
\end{proposition}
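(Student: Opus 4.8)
The plan is to combine the standard equivalence between geometric ergodicity and a geometric drift condition with Assumption~\ref{ass:rwm}: because the chain moves in bounded increments with probability close to $1$, any drift function must grow at least exponentially fast in $d(x^*,\cdot)$, and since such a drift function is $\Pi$-integrable, this delivers the exponential moment.

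First I would make two reductions. Because the pseudometric $d$ is real-valued, $E=\bigcup_{n\ge1}B_n(x^*)$, so continuity of measure from below together with standing assumption~(c) yields an $R_0<\infty$ and a small set $C\subseteq B_{R_0}(x^*)$ with $\Pi(C)>0$ (a subset of a small set is small, and $C$ is accessible). By the converse part of Meyn--Tweedie's theory (Theorems~15.0.1 and~16.0.1 of \citet{MR2509253}), geometric ergodicity of $P$ then provides a function $V:E\rightarrow[1,\infty]$ with $V<\infty$ $\Pi$-a.s., together with constants $\gamma\in(0,1)$ and $b<\infty$, such that
\begin{align*}
	PV\le \gamma V+b\,1_C ,
\end{align*}
and, crucially, $\Pi(V)<\infty$.

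The core of the argument is a descent estimate. Fix $\tilde\gamma\in(\gamma,1)$ and then $\epsilon>0$ so small that $\gamma/(1-\epsilon)<\tilde\gamma$, and take $r>0$ as supplied by Assumption~\ref{ass:rwm} for this $\epsilon$. If $x\notin C$ and $V(x)<\infty$, then $PV(x)\le\gamma V(x)$ and $P(x,B_r(x))>1-\epsilon$, so
\begin{align*}
	\gamma V(x)\ \ge\ PV(x)\ \ge\ \int_{B_r(x)}P(x,\dif y)\,V(y)\ \ge\ (1-\epsilon)\,\inf_{y\in B_r(x)}V(y) ,
\end{align*}
hence $\inf_{y\in B_r(x)}V(y)<\tilde\gamma V(x)$ and there is a point $x_1\in B_r(x)$ with $V(x_1)<\tilde\gamma V(x)<\infty$. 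Iterating this as long as the current point lies outside $C$ produces $x=x_0,x_1,x_2,\dots$ with $d(x_k,x_{k+1})<r$ and $V(x_{k+1})<\tilde\gamma V(x_k)$; since $V\ge1$ while $\tilde\gamma^kV(x)\to0$, the iteration must terminate at a first index $k_0\ge1$ with $x_{k_0}\in C$. Then $1\le V(x_{k_0})<\tilde\gamma^{k_0}V(x)$ and $d(x,C)\le d(x,x_{k_0})<k_0r$, so $V(x)>\tilde\gamma^{-k_0}$ and $k_0>d(x,C)/r\ge(d(x^*,x)-R_0)/r$; as $\tilde\gamma<1$ this yields $V(x)>e^{-\delta R_0}\exp(\delta\,d(x^*,x))$ with $\delta:=-(\log\tilde\gamma)/r>0$. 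The same bound $V(x)\ge e^{-\delta R_0}\exp(\delta\,d(x^*,x))$ is trivial when $V(x)=\infty$, and when $d(x^*,x)<R_0$ it follows from $V\ge1$; so it holds for every $x\in E$.

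Integrating against $\Pi$ then finishes the proof, since
\begin{align*}
	\int_{x\in E}\exp(\delta\,d(x^*,x))\,\Pi(\dif x)\ \le\ e^{\delta R_0}\,\Pi(V)\ <\ \infty .
\end{align*}
The main obstacle I anticipate is the second paragraph rather than the third: one must apply the Meyn--Tweedie converse with some care to obtain a geometric drift \emph{toward a bounded small set} with $\Pi(V)<\infty$ (this is where the $\sigma$-boundedness of $(E,d)$ and accessibility of $C$ enter). Granting that, the descent estimate is elementary, and it is exactly there that Assumption~\ref{ass:rwm} is used.
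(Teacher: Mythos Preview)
Your argument is correct and takes a genuinely different route from the paper's. Both proofs start from the Meyn--Tweedie converse (geometric ergodicity $\Rightarrow$ a drift inequality $PV\le\gamma V+b1_C$ with $\Pi(V)<\infty$) and both need the small set $C$ in that inequality to be bounded. The paper handles this by a separate Lemma~\ref{lem:smallset}: under Assumption~\ref{ass:rwm}, \emph{every} small set is bounded (pigeonhole: if $C$ were unbounded, pick two far-apart points of $C$ whose $k$-step $r$-neighbourhoods are disjoint and contradict the common minorant $\delta\nu$). This dissolves the obstacle you flag in your final paragraph and makes your first reduction unnecessary---you may simply take whatever small set the converse hands you. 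From that point the two proofs diverge in substance. The paper introduces the $\Pi$-integrable series $\sum_n s^n(I_{C^c}P)^n(x,E)$ (Lemma~\ref{lem:integrable}) and bounds its $n(x)$th term below by the probability that the chain stays inside a chain of $r$-balls for $n(x)\approx d(x^*,x)/r$ steps, yielding $\ge c\,e^{\delta d(x^*,x)}$. You instead run a deterministic greedy descent $V(x_{k+1})<\tilde\gamma V(x_k)$ through $r$-balls to conclude $V(x)\gtrsim e^{\delta d(x^*,x)}$ directly. Your route is a little more elementary (no auxiliary series, no second lemma), while the paper's is purely measure-theoretic and avoids the pointwise selection of near-minimisers; both exploit Assumption~\ref{ass:rwm} in essentially the same way, just read in opposite directions.
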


To prove the proposition, we need two simple lemmas. The first lemma says that  small set is ``small''. The second lemma says that under geometric ergodicity, there is a small set ``large enough''. 

\begin{lemma}\label{lem:smallset}
	Under Assumption \ref{ass:rwm}, any small set is bounded. 
\end{lemma}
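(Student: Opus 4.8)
\textbf{Proof proposal for Lemma \ref{lem:smallset}.}

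The plan is to argue by contradiction. Suppose $C$ is a small set that is unbounded, so there is a sequence $(x_n)\subset C$ with $d(x^*,x_n)\to\infty$. By the definition of small set (\ref{eq:smallset}), there exist $k\in\mathbb{N}$, $\delta\in(0,1)$ and a probability measure $\nu$ with $P^k(x,\cdot)\ge\delta\nu$ for every $x\in C$. Fix any bounded set $B$ with $\nu(B)>0$ (such a $B$ exists since $\nu$ is a probability measure and $E=\bigcup_r B_r(x^*)$, so $\nu(B_R(x^*))>0$ for $R$ large); then $P^k(x_n,B)\ge\delta\nu(B)>0$ for all $n$. The strategy is to show this is impossible because Assumption \ref{ass:rwm} forces the $k$-step kernel started far away to place almost all its mass far away.

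The key step is a uniform concentration estimate: under Assumption \ref{ass:rwm}, for every $\epsilon>0$ there is an $r>0$ with $P(x,B_r(x))>1-\epsilon$ for all $x$, and by iterating this $k$ times and using the triangle inequality for the pseudometric, one gets a radius $r_k=r_k(\epsilon,k)$ with $P^k(x,B_{r_k}(x))>1-k\epsilon$ for all $x\in E$. (Concretely: write $P^{k}(x,\cdot)=\int P(x,dz)P^{k-1}(z,\cdot)$, control the first step by $B_r(x)$ up to mass $\epsilon$, control the remaining $k-1$ steps inductively up to mass $(k-1)\epsilon$ on $B_{r_{k-1}}(z)\subset B_{r_{k-1}+r}(x)$, and combine.) Choosing $\epsilon$ so that $k\epsilon<\delta\nu(B)$ fixes a radius $R_0:=r_k$ such that $P^k(x,B_{R_0}(x))>1-\delta\nu(B)$ for all $x$.

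Now take $n$ large enough that $d(x^*,x_n)>R_0+R_B$, where $R_B$ is chosen so that $B\subset B_{R_B}(x^*)$. Then by the triangle inequality $B_{R_0}(x_n)\cap B=\emptyset$, hence
\begin{align*}
	P^k(x_n,B)\le P^k(x_n, E\setminus B_{R_0}(x_n))<\delta\nu(B),
\end{align*}
which contradicts $P^k(x_n,B)\ge\delta\nu(B)$ obtained above. Therefore no small set can be unbounded, proving the lemma.

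The main obstacle I expect is purely bookkeeping: making the $k$-fold iteration of Assumption \ref{ass:rwm} rigorous with a uniform (in $x$) radius, since at each step the "escape" set depends on the current point. This is handled cleanly by induction on $k$ together with the triangle inequality $B_{r}(z)\subset B_{r+s}(x)$ whenever $d(x,z)<s$, so the radii simply add and the exceptional masses simply add; there is no real difficulty, only care in writing the induction.
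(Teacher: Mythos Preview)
Your argument is correct and follows essentially the same strategy as the paper's proof: both iterate Assumption~\ref{ass:rwm} to obtain a uniform $k$-step concentration bound $P^k(x,B_{R}(x))>1-\eta$, and then exploit the unboundedness of $C$ to contradict the minorization $P^k(x,\cdot)\ge\delta\nu$. The only cosmetic difference is in how the contradiction is finished: you fix a bounded $B$ with $\nu(B)>0$ and send a single point to infinity, whereas the paper picks two points $x_1,x_2\in C$ with disjoint $kr$-balls and shows $\nu(B_{kr}(x_i))>1/2$ for both, forcing $\nu(E)>1$.
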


\begin{proof}
Assume by contradiction that there is an unbounded small set $C$ such that (\ref{eq:smallset}). Choose $r>0$ so that $P(x, B_r(x))>(1-\delta/2)^{1/k}$. Then by definition, by putting $x_0=x$, 
\begin{align*}
	P^k(x,B_{kr}(x))\ge \int \prod_{l=1}^kI_{B_r(x)}(x_l)P(x_{l-1},\dif x_l)> \overbrace{\left(1-\delta/2\right)^{1/k}\cdots \left(1-\delta/2\right)^{1/k}}^k=1-\delta/2
\end{align*}
and thus $P^k(x, B_{kr}(x)^c)\le\delta/2$. 
Since $C$ is unbounded, we can choose $x_i\in C\ (i=1,2)$ so that $B_{kr}(x_1)\cap B_{kr}(x_2)=\emptyset$. 
Then 
\begin{align*}
	P^k(x_i,B_{kr}(x_i)^c)\ge \delta\nu(B_{kr}(x_i)^c)
\end{align*}
and hence $\nu(B_{kr}(x_i))=1-\nu(B_{kr}(x_i)^c)\ge 1-\delta^{-1}P^k(x_i,B_{kr}(x_i)^c)> 1/2$ for $i=1,2$. This would imply $\nu(E)\ge \nu(B_{kr}(x_1))+\nu(B_{kr}(x_2))>1$ which is a contradiction. Thus any small set is bounded. 
\end{proof}

\begin{lemma}\label{lem:integrable} 
	If transition kernel $P$ satisfies (\ref{eq:drift_condition}), then there exists $s>1$ such that
	\begin{align*}
		\sum_{n=0}^\infty s^n(I_{C^c}P)^n(x,E)
	\end{align*}
	is $\Pi$-integrable. 
\end{lemma}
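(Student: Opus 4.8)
The plan is to observe that restricting the kernel to $C^c$ deletes the $b1_C$ term in the geometric drift condition, turning (\ref{eq:drift_condition}) into a clean supermartingale-type inequality for the killed chain. Write $\check{P}:=I_{C^c}P$ for the sub-Markov kernel $\check{P}(x,\dif y)=I_{C^c}(x)P(x,\dif y)$, so that $(I_{C^c}P)^n(x,E)=\check{P}^n1(x)$ is the $\check{P}^n$-mass of $E$ (probabilistically, $\sum_n s^n(I_{C^c}P)^n(x,E)$ is essentially $\mathbb{E}_x[\sum_{k=0}^{\sigma_C}s^k]$ with $\sigma_C$ the first entrance time to $C$). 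For $x\in C^c$ the drift condition reads $PV(x)\le\gamma V(x)$, while for $x\in C$ we trivially have $\check{P}V(x)=0\le\gamma V(x)$; hence $\check{P}V\le\gamma V$ on all of $E$.

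Next I would iterate this. Since $\check{P}$ is a positive operator, from $\check{P}^nV\le\gamma^nV$ we get $\check{P}^{n+1}V=\check{P}(\check{P}^nV)\le\check{P}(\gamma^nV)=\gamma^n\check{P}V\le\gamma^{n+1}V$, so $\check{P}^nV\le\gamma^nV$ for every $n\ge0$ by induction. Because $V\ge1$ this yields $(I_{C^c}P)^n(x,E)=\check{P}^n1(x)\le\check{P}^nV(x)\le\gamma^nV(x)$. As $0<\gamma<1$ we may fix any $s$ with $1<s<\gamma^{-1}$, and summing the geometric series gives the pointwise bound
\[
\sum_{n=0}^{\infty}s^{n}(I_{C^c}P)^{n}(x,E)\le\sum_{n=0}^{\infty}(s\gamma)^{n}V(x)=\frac{V(x)}{1-s\gamma}.
\]

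Finally, $\Pi$-integrability follows because the geometric drift condition forces $\Pi(V)<\infty$: rewriting (\ref{eq:drift_condition}) as $PV\le V-(1-\gamma)V+b1_C$ and integrating against $\Pi$ using $\Pi P=\Pi$ gives $(1-\gamma)\Pi(V)\le b\Pi(C)\le b<\infty$, so $\Pi(V)\le b/(1-\gamma)$; integrating the displayed bound then yields $\int_E\sum_{n=0}^{\infty}s^{n}(I_{C^c}P)^{n}(x,E)\,\Pi(\dif x)\le\Pi(V)/(1-s\gamma)<\infty$. The only point needing care is this last step: since $V$ is only assumed finite $\Pi$-a.s., one cannot a priori cancel $\Pi(V)$ from both sides, so the bound $\Pi(V)<\infty$ should be obtained via the standard truncation/Fatou argument (see \cite{MR2509253}); a $\Pi$-null set where $V=\infty$ is harmless for integrability. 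Everything else is a one-line induction and a geometric sum.
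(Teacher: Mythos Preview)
Your proof is correct and follows essentially the same route as the paper: both show $(I_{C^c}P)V\le\gamma V$ from the drift condition, iterate to $(I_{C^c}P)^nV\le\gamma^nV$, use $V\ge1$ to bound $(I_{C^c}P)^n(x,E)$, and sum the geometric series. For $\Pi(V)<\infty$ the paper simply cites Theorem~14.3.7 of \cite{MR2509253}, whereas you sketch the invariance/truncation argument behind it; this is the same content, just slightly more spelled out.
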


\begin{proof}
By  (\ref{eq:drift_condition}), 
	\begin{align*}
		(I_{C^c}P)V=I_{C^c}(PV)\le I_{C^c}(\gamma V+b1_C)=I_{C^c}\gamma V\le \gamma V. 
	\end{align*}
	Then by choosing $s>1$ so that $s\gamma<1$, we have
	\begin{align*}
		\sum_{n=0}^\infty s^n(I_{C^c}P)^n(x,E)\le
		\sum_{n=0}^\infty s^n(I_{C^c}P)^nV\le \sum_{n=0}^\infty s^n\gamma^nV\le\frac{1}{1-s\gamma}V. 
	\end{align*}
	By assumption, $P$ is irreducible and $\Pi$-invariant. 
By Theorem 14.3.7 of \cite{MR2509253}, the drift function $V$ in (\ref{eq:drift_condition}) is $\Pi$-integrable.
	Hence the left-hand side is also $\Pi$-integrable 
\end{proof}

\begin{proof}[Proof of Proposition \ref{pro:integrable}]
By Lemma \ref{lem:smallset}, small set $C$ is bounded. 
We choose $r>0$ such a way that $C\subset B_r(x^*)$
and $P(x, B_r(x))>s^{-1}\ (x\in E)$ where $s>1$ is as in Lemma \ref{lem:integrable}. 
	If $x_0\notin B_{nr}(x^*)$ and if $x_n\in B_r(x_{n-1})\ (n\ge 1)$, then 
	$x_0,\ldots, x_{n-1}\notin B_r(x^*)$. Therefore
	\begin{align*}
		(I_{C^c}P)^n(x_0, E)\ge (I_{B_r(x^*)^c}P)^n(x_0, E)\ge \int \prod_{m=1}^{n}I_{B_r(x_{m-1})}(x_m)P(x_{m-1},\dif x_m)\ge (\inf_xP(x,B_r(x)))^n=:\eta^n
	\end{align*}
	where $\eta>s^{-1}$. Thus 
	\begin{align*}
		s^{n(x)}(I_{C^c}P)^{n(x)}(x,E)\ge  (s\eta)^{n(x)}=\exp(n(x)\log(s\eta))
	\end{align*}
	 where $n(x)=\left[d(x^*,x)/r\right]\ge d(x^*,x)/r-1$
	where $[t]$ is the integer part of $t>0$. 
	Therefore we can find $c>0$ and $\delta>0$ such that 
	\begin{align*}
		s^{n(x)}(I_{C^c}P)^{n(x)}(x,E)\ge  c\exp(\delta d(x^*,x)). 
	\end{align*}
	Since the left-hand side is $\Pi$-integrable by Lemma \ref{lem:integrable}, the right-hand side is also $\Pi$-integrable. 
\end{proof}

By Proposition \ref{pro:integrable}, RWM kernel is geometrically ergodic only if $\Pi$ has a light-tailed density. The MpCN kernel has the same property but after the projection $x\mapsto \log\|x\|^2$. The requirement of the MpCN kernel is that $\Pi$ has a polynomial-tailed density  which is much weaker condition compared to the RWM kernel. 

\begin{corollary}[\cite{MR1996270}]\label{cor:rwm}
The random-walk Metropolis kernel on $\mathbb{R}^d$ satisfy Assumption \ref{ass:rwm} for Euclidean metric $d(x,y)$. Thus
if the kernel is geometrically ergodic, by taking $x^*=0$, 
	\begin{align}\label{eq:mgf}
		\int \exp(\delta\|x\|)\Pi(\dif x)<\infty
	\end{align}
	for some $\delta>0$, where $\|\cdot\|$ is Euclidean norm. 
\end{corollary}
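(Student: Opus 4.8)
The plan is to verify Assumption \ref{ass:rwm} for the RWM kernel equipped with the Euclidean metric and then invoke Proposition \ref{pro:integrable} with $x^*=0$; the corollary is then immediate. The one structural observation that does the work is that a single RWM step moves the chain by at most $\|w\|$ in Euclidean norm: from a state $x$ the proposal is $x^*=x+w$ with $w\sim\Gamma$, and the chain either accepts (landing at $x+w$) or rejects (staying at $x$), so in either case, provided $\|w\|<r$, the new state lies in $B_r(x)$. Note that the Euclidean metric generates the usual Borel $\sigma$-algebra on which $\Pi$ and $P$ are defined, so the hypotheses of Section \ref{sec:necessary} are in the right setting.

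Concretely, using the MH form of $P$ and the fact that $x\in B_r(x)$ for every $r>0$, I would write
\begin{align*}
	P(x,B_r(x))=\int_{B_r(x)}Q(x,\dif y)\alpha(x,y)+\Bigl(1-\int_E Q(x,\dif z)\alpha(x,z)\Bigr)\ge\int_{B_r(x)}Q(x,\dif y)\alpha(x,y)+\int_{B_r(x)}Q(x,\dif z)\bigl(1-\alpha(x,z)\bigr)=Q(x,B_r(x))=\Gamma(B_r(0)),
\end{align*}
where the inequality just throws away the nonnegative contribution of rejected proposals outside $B_r(x)$, and the last equality uses $Q(x,\dif x^*)=\Gamma(\dif x^*-x)$. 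The right-hand side $\Gamma(B_r(0))=\Gamma(\{w:\|w\|<r\})$ does not depend on $x$, and since $\Gamma$ is a probability measure on $\mathbb{R}^d$ and $B_r(0)\uparrow\mathbb{R}^d$ as $r\to\infty$, we have $\Gamma(B_r(0))\to 1$. Hence, given $\epsilon>0$, picking $r$ with $\Gamma(B_r(0))>1-\epsilon$ establishes Assumption \ref{ass:rwm}.

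Once Assumption \ref{ass:rwm} holds for $(E,d)=(\mathbb{R}^d,\|\cdot-\cdot\|)$, Proposition \ref{pro:integrable} applied with the fixed base point $x^*=0$ gives, whenever the RWM kernel is geometrically ergodic, some $\delta>0$ with $\int\exp(\delta\|x\|)\,\Pi(\dif x)<\infty$, which is exactly (\ref{eq:mgf}). I do not expect a real obstacle here: the only point that needs a moment's care is the displacement bound above — specifically, that a rejected proposal still leaves the chain inside $B_r(x)$, so that the uniform lower bound is the full proposal mass $\Gamma(B_r(0))$ rather than only its accepted part — and the rest is a direct specialization of the already-proved proposition.
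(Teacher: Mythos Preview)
Your proposal is correct and is exactly the argument the paper leaves implicit: the corollary is stated without proof, as it follows immediately from Proposition \ref{pro:integrable} once Assumption \ref{ass:rwm} is checked, and your displacement bound $P(x,B_r(x))\ge Q(x,B_r(x))=\Gamma(B_r(0))\to 1$ is the standard verification. There is nothing to add.
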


\begin{corollary}\label{cor:mpcn}
The MpCN kernel on $\mathbb{R}^d$ satisfy Assumption \ref{ass:rwm} for $d(x,y)=|\log(\|x\|^2)-\log(\|y\|^2)|$ on where $\|\cdot\|$ is Euclidean norm. Thus
if the kernel is geometrically ergodic, by taking $x^*\in S^{d-1}$, 
	\begin{align*}
		\int \exp\left(\delta\left|\log\left(\|x\|^2\right)\right|\right)\Pi(\dif x)<\infty
	\end{align*}
	for some $\delta>0$. In particular, 
		\begin{align*}
		\int \|x\|^\delta\Pi(\dif x)<\infty. 
	\end{align*}
\end{corollary}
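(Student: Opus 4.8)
The plan is to verify that the MpCN kernel satisfies Assumption \ref{ass:rwm} for the pseudometric $d(x,y)=|\log(\|x\|^2)-\log(\|y\|^2)|$ and then to quote Proposition \ref{pro:integrable}. First I would record that $d$ is a genuine pseudometric on $E=\mathbb{R}^d\setminus\{0\}$: writing $\Psi(x)=\log(\|x\|^2)$, symmetry and the triangle inequality are inherited from the absolute value on $\mathbb{R}$, and the ball $B_r(x)$ is the spherical shell $\Psi^{-1}\big((\Psi(x)-r,\Psi(x)+r)\big)$. Passing from $\mathbb{R}^d$ to $E$ costs nothing, since $\Pi$ has a density, so $\Pi(\{0\})=0$, and by (\ref{eq:xstar}) the MpCN proposal started from $x\neq 0$ is almost surely nonzero.

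The key input is the random-walk property established in Proposition \ref{prop:rwp}. Because the proposal kernel $Q$ of the MpCN kernel has the random-walk property with respect to $\Psi$, we obtain $Q(x,B_r(x))=\Gamma\big((-r,r)\big)$, where $\Gamma=\mathcal{L}(\xi(\tilde{w}))$ is a fixed probability distribution on $\mathbb{R}$ that does not depend on $x$. Since $x\in B_r(x)$ always, the Metropolis-Hastings identity gives
\begin{align*}
	P(x,B_r(x))=1-\int_{B_r(x)^c}Q(x,\dif y)\alpha(x,y)\ge 1-Q(x,B_r(x)^c)=\Gamma\big((-r,r)\big),
\end{align*}
that is, the rejection part of $P$ only increases the mass placed near $x$. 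As $\Gamma$ is a probability measure, $\Gamma((-r,r))\to 1$ as $r\to\infty$, so for any $\epsilon>0$ one can choose $r$ with $\Gamma((-r,r))>1-\epsilon$, uniformly in $x\in E$; this is precisely Assumption \ref{ass:rwm}.

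It then remains to apply Proposition \ref{pro:integrable} with a reference point $x^*\in S^{d-1}$, for which $\|x^*\|=1$ and hence $d(x^*,x)=|\log(\|x\|^2)|$; geometric ergodicity yields $\delta>0$ with $\int\exp\big(\delta|\log(\|x\|^2)|\big)\,\Pi(\dif x)<\infty$. For the final assertion, note that $\|x\|^{2\delta}=\exp\big(\delta\log(\|x\|^2)\big)\le\exp\big(\delta|\log(\|x\|^2)|\big)$ when $\|x\|\ge 1$, while $\|x\|^{2\delta}<1\le\exp\big(\delta|\log(\|x\|^2)|\big)$ when $\|x\|<1$; integrating the bound $\|x\|^{2\delta}\le\exp(\delta|\log(\|x\|^2)|)$ against the probability measure $\Pi$ gives $\int\|x\|^{2\delta}\,\Pi(\dif x)<\infty$, which is the stated claim with $\delta$ replaced by $2\delta$.

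I do not expect a serious obstacle here. The only point needing a little care is the reduction of the Metropolis-Hastings kernel $P$ to its proposal $Q$ in the concentration estimate (that rejection can only help), together with the minor bookkeeping of excising the origin so that $\Psi$ and $d$ are well defined. The freedom to take $x^*$ anywhere on the unit sphere is immaterial precisely because $d$ does not separate points of equal Euclidean norm.
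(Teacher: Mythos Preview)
Your proof is correct and follows exactly the route the paper intends: the corollary is stated there without explicit proof, relying on Proposition~\ref{prop:rwp} (the random-walk property of the MpCN proposal under $\Psi(x)=\log\|x\|^2$) to verify Assumption~\ref{ass:rwm}, and then on Proposition~\ref{pro:integrable} with $x^*\in S^{d-1}$. Your additional observations---that rejection only increases $P(x,B_r(x))$ relative to $Q(x,B_r(x))$, and that one should work on $E=\mathbb{R}^d\setminus\{0\}$ so that $\Psi$ and the pseudometric are well defined---are the right details to fill in and are consistent with the paper's framework.
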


Proposition \ref{pro:integrable} is useful for MH kernels with transient proposal kernels, but may not be useful for those with ergodic proposal kernels. In order to study necessary condition for the latter case, we need to estimate  the acceptance probability. There is a useful result due to \citet{MT2,RT}. 

\begin{proposition}[\cite{RT}]\label{pro:pcn}
If $P$ is geometrically ergodic, then $\Pi$-$\mathrm{ess}\sup P(x,\{x\})<1$. 
\end{proposition}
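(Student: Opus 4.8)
The plan is to combine the geometric rate of convergence with the elementary observation that, for the chain started at $x$, staying put for $n$ consecutive steps is one way of having $X_n=x$, so that $P^n(x,\{x\})\ge P(x,\{x\})^n$.

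First I would record what geometric ergodicity provides: constants $c<\infty$, $\gamma\in(0,1)$ and a function $V$, finite $\Pi$-a.s., with $\|P^n(x,\cdot)-\Pi\|_V\le c\gamma^nV(x)$ for all $x$ and $n$. Applying this with the test function $1_{\{x\}}$, which satisfies $|1_{\{x\}}|\le 1\le V$, gives $|P^n(x,\{x\})-\Pi(\{x\})|\le c\gamma^nV(x)$. Together with $P^n(x,\{x\})\ge P(x,\{x\})^n$ this yields
\begin{align*}
	P(x,\{x\})^n\le \Pi(\{x\})+c\gamma^nV(x)\qquad(n\ge 1).
\end{align*}

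Now fix $x$ in the ($\Pi$-full) set where $V(x)<\infty$; since the targets considered are of the form $\Pi(\dif x)=\pi(x)\dif x$ and hence non-atomic, we also have $\Pi(\{x\})=0$, so $(P(x,\{x\})/\gamma)^n\le cV(x)$ for every $n$, which is impossible unless $P(x,\{x\})\le\gamma<1$. Hence $\Pi$-$\mathrm{ess}\sup P(x,\{x\})\le\gamma<1$, as claimed. The only subtlety is the treatment of atoms of $\Pi$: for $x$ with $\Pi(\{x\})>0$ the displayed inequality only gives $P(x,\{x\})<1$ pointwise (letting $n\to\infty$), not a uniform bound. This case is vacuous for the continuous targets relevant here; in general one notes that each atom is itself a small set and invokes the geometric drift bound $PV\le\gamma V+b1_C$, which forces $P(x,\{x\})\le\gamma$ off $C$, thereby reducing the question to a small set as in \cite{RT}. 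I do not expect any real obstacle: the content is entirely in the non-atomic argument above.
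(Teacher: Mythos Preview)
Your argument is correct for non-atomic $\Pi$ (hence for every target $\Pi(\dif x)=\pi(x)\dif x$ used in this paper), and it takes a genuinely different route from the paper's proof. The paper argues by contradiction through the \emph{drift condition}: assuming $\Pi$-$\mathrm{ess}\sup P(x,\{x\})=1$, it selects $x_n$ with $P(x_n,\{x_n\})\ge 1-1/n$ and $V(x_n)<\infty$, shows via the minorisation measure that any small set $C$ can contain only finitely many of the $x_n$, and then for $x_n\notin C$ the inequalities $P(x_n,\{x_n\})V(x_n)\le PV(x_n)\le\gamma V(x_n)$ force $\gamma\ge 1-1/n\to 1$. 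You bypass the drift condition entirely, combining the $V$-norm convergence bound directly with the elementary $P^n(x,\{x\})\ge P(x,\{x\})^n$; this is shorter and even yields the explicit estimate $P(x,\{x\})\le\gamma$ for $\Pi$-a.e.\ $x$, with $\gamma$ the geometric rate itself. The price is that you genuinely need $\Pi(\{x\})=0$: when $\Pi$ has atoms your displayed inequality gives only $P(x,\{x\})<1$ pointwise, not uniformly, and your sketch for that case (fall back on the drift bound off $C$, then ``reduce to a small set'') is essentially the paper's own argument, left unfinished. Since the standing assumption (d) here only demands $\Pi(\{x\})<1$, the paper's proof is the more general of the two; for densities your method is entirely adequate and arguably cleaner.
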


\begin{proof}
Let $E'=\{x; V(x)<\infty\}$. 
To obtain a contradiction, suppose $\Pi$-$\mathrm{ess}\sup P(x,\{x\})=1$.
Choose $x_n\in E'$ so that $P(x_n,\{x_n\})\ge 1-n^{-1}$. Any small set $C$ only includes finitely many elements of $\{x_n\}_n$. Otherwise, if (\ref{eq:smallset}) is satisfied, then 
	\begin{align*}
		\delta\nu\left(\{x_n;n\ge N\}^c\right)\le 
		\delta\nu\left(\{x_N\}^c\right)\le 
		P^k(x_N,\{x_N\}^c)\le 1-\left(1-\frac{1}{N}\right)^k
	\end{align*}
	for each $x_N\in C$. Taking $N\rightarrow\infty$ we have $\delta=0$ and hence this contradicts $C$ is a small set. 
	
	By geometric ergodicity, (\ref{eq:drift_condition}) is satisfied. Choose $x_n$ as above such that $x_n\notin C$. Then 
	\begin{align*}
		\left(1-\frac{1}{n}\right)V(x_n)\le P(x_n,\{x_n\})V(x_n)\le PV(x_n)\le \gamma V(x_n). 
	\end{align*}
	By taking $n\rightarrow\infty$, $\gamma=1$ and hence this contradicts our assumption for geometric ergodicity of $P$. 
\end{proof}

We state a necessary  condition for ergodicity for the pCN kernel as a corollary of Proposition \ref{pro:pcn}. It says that the pCN kernel requires even lighter-tailed density  for the target distribution than the RWM kernel.

\begin{corollary}\label{cor:pcn}
Suppose that $\Pi$ has a probability density $\pi(x)$. 
	For each $r>0$, let  
	\begin{align*}
		C_r=r^{-2}\sup_{\rho r\le \|x\|,\|y\|\le r}|\log\pi(x)-\log\pi(y)|. 
	\end{align*}
	If the pCN kernel is geometrically ergodic, then $\liminf_{r\rightarrow\infty}C_r\ge(1-\rho)/2$. 
\end{corollary}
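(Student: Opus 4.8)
The plan is to argue by contraposition through Proposition~\ref{pro:pcn}. Suppose $\liminf_{r\to\infty}C_r<(1-\rho)/2$; otherwise there is nothing to prove. Then there are $\epsilon>0$ and $r_n\uparrow\infty$ with $C_{r_n}\le(1-\rho)/2-\epsilon$, and $C_{r_n}<\infty$ forces $\log\pi$ to be bounded on the annulus $\{\rho r_n\le\|x\|\le r_n\}$, so $\pi$ is bounded away from $0$ there. I will produce annuli $B_n$ with $\Pi(B_n)>0$ on which $P(x,\{x\})\to 1$ as $n\to\infty$, uniformly in $x\in B_n$; this gives $\Pi$-$\mathrm{ess}\sup P(x,\{x\})=1$ and hence, by Proposition~\ref{pro:pcn}, rules out geometric ergodicity of the pCN kernel.

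The mechanism is that the pCN proposal contracts the norm. Writing $x^*=\rho^{1/2}x+(1-\rho)^{1/2}w$ with $w\sim N_d(0,I_d)$,
\begin{align*}
\|x^*\|^2-\|x\|^2=-(1-\rho)\|x\|^2+2\sqrt{\rho(1-\rho)}\langle x,w\rangle+(1-\rho)\|w\|^2 .
\end{align*}
Fix $\eta$ with $0<\eta<\min\{1-\sqrt\rho,\ \epsilon/(1-\rho)\}$ and set $B_n=\{x:(1-\eta)r_n\le\|x\|\le r_n\}$; since $1-\eta>\rho$ this is contained in the annulus above, so $\Pi(B_n)>0$. For $x\in B_n$ I would condition on the event $G_K=\{|\langle x/\|x\|,w\rangle|\le K,\ \|w\|^2\le K\}$: here $\langle x/\|x\|,w\rangle$ has a standard normal law regardless of $x$ and $\|w\|^2\sim\chi^2_d$, so $\mathbb{P}(G_K^c)<\delta$ once $K$ is large, uniformly in $x$. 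On $G_K$ the displayed identity gives $\|x^*\|^2=\rho\|x\|^2+O(r_n)$, so for $n$ large $\rho r_n\le\|x^*\|\le r_n$ — this uses $\rho(1-\eta)^2>\rho^2$, i.e.\ $\eta<1-\sqrt\rho$, to clear the lower bound — whence both $x$ and $x^*$ lie in the annulus defining $C_{r_n}$, and therefore $|\log\pi(x^*)-\log\pi(x)|\le C_{r_n}r_n^2$.

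Since $\tilde\pi=\phi_d$ for the pCN kernel, on $G_K$,
\begin{align*}
\alpha(x,x^*)\le\exp\!\Big(\log\pi(x^*)-\log\pi(x)+\tfrac12\big(\|x^*\|^2-\|x\|^2\big)\Big)\le\exp\!\Big(C_{r_n}r_n^2-\tfrac12(1-\rho)\|x\|^2+O(r_n)\Big),
\end{align*}
the $O(r_n)$ term depending only on $K$. Using $C_{r_n}\le(1-\rho)/2-\epsilon$ and $\|x\|\ge(1-\eta)r_n$, the coefficient of $r_n^2$ in the exponent is at most $\tfrac12(1-\rho)\big(1-(1-\eta)^2\big)-\epsilon\le(1-\rho)\eta-\epsilon<0$, so $\sup_{x\in B_n}\sup_{G_K}\alpha(x,x^*)\to0$ as $n\to\infty$. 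Consequently, for $x\in B_n$,
\begin{align*}
1-P(x,\{x\})=\int Q(x,\dif x^*)\,\alpha(x,x^*)\le\mathbb{P}(G_K^c)+\sup_{G_K}\alpha(x,x^*)\le\delta+o(1),
\end{align*}
and letting $n\to\infty$ and then $\delta\downarrow0$ yields $\inf_{x\in B_n}P(x,\{x\})\to1$; since $\Pi(B_n)>0$ this is the desired contradiction with Proposition~\ref{pro:pcn}.

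I expect the only delicate point to be the bookkeeping that pins $\|x^*\|$ into the annulus $[\rho r_n,r_n]$ on which $C_{r_n}$ controls the oscillation of $\log\pi$: this caps the shell width by $\eta<1-\sqrt\rho$, whereas the exponent estimate wants $\eta$ small compared with $\epsilon$. One just checks that these requirements are simultaneously met by a fixed positive $\eta$, after which the remaining estimates are routine Gaussian tail bounds.
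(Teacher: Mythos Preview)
Your proof is correct and follows essentially the same route as the paper: both argue via Proposition~\ref{pro:pcn} by bounding the log acceptance ratio as $C_r r^2-\tfrac{1-\rho}{2}\|x\|^2+O(r)$ and showing it tends to $-\infty$ along a subsequence when $\liminf_r C_r<(1-\rho)/2$. The differences are purely technical: the paper fixes $w$, lets $\|x\|\to\infty$ pointwise, and invokes dominated convergence, whereas you condition on the high-probability Gaussian event $G_K$ to get uniform control; and you work on thin shells $B_n$ of positive $\Pi$-measure rather than single points $x_n$, which makes the passage to the $\Pi$-essential supremum cleaner (the paper's choice of a single $x_n$ on the sphere $\|x\|=r_n$ quietly needs that such a point can be taken in the Lebesgue full-measure set where $1-P(x,\{x\})>\delta$).
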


\begin{proof}
Write $x^*=\rho^{1/2}x+(1-\rho)^{1/2}w$. 
Assume that the pCN kernel is geometrically ergodic. By Proposition \ref{pro:pcn}, 
\begin{align}\label{eq:pcnacceptprob}
	\delta<1-P(x,\{x\})=\int \min\left\{1,\frac{\pi(x^*)\phi_d(x)}{\pi(x)\phi_d(x^*)}\right\}\phi_d(w)\dif w
\end{align}
for (Leb) a.s. $x$ for some $\delta>0$. 
%Choose a compact set $K$ so that $\int_{K^c}\phi_d(w)\dif w\le \delta/2$. Then we have
%\begin{align}\label{eq:pcnacceptprob}
%	\frac{\delta}{2}<\int_K \min\left\{1,\frac{\pi(x^*)\phi_d(x)}{\pi(x)\phi_d(x^*)}\right\}\phi_d(w)\dif w\ (x\in \mathbb{R}^d). 
%\end{align}
%Fix $w\in K$. 
By triangular inequality, for sufficiently large $r=\|x\|$, 
we have $\rho r\le \|x^*\|\le r$ since
\begin{align*}
	\|x^*\|=\|\rho^{1/2}x+(1-\rho)^{1/2}w\|=\rho^{1/2}\|x\|+o(\|x\|)
\end{align*}
with an obvious inequality $\rho<\rho^{1/2}<1$. 
 Thus for each $w\in \mathbb{R}^d$, 
\begin{align*}
\log\left(\frac{\pi(x^*)\phi_d(x)}{\pi(x)\phi_d(x^*)}\right)&=
\left\{\log\pi(x^*)-\log\pi(x)\right\}+
\left\{\log\phi_d(x)-\log\phi_d(x^*)\right\}\\
&=
\left\{\log\pi(x^*)-\log\pi(x)\right\}+
\left\{-\frac{\|x\|^2}{2}+\frac{\rho\|x\|^2+2\sqrt{\rho(1-\rho)}\langle x,w\rangle+(1-\rho)\|w\|^2}{2}\right\}\\
&\le C_rr^2-\frac{1-\rho}{2}r^2+O(r)\\
&= r^2\left\{C_r-\frac{1-\rho}{2}+O(r^{-1})\right\}. 
\end{align*}
Therefore if $\liminf C_r<(1-\rho)/2$, we can choose a sequence of $r=r_n=\|x_n\|$ such that the right-hand side of the above tends to $0$. 
By Lebesgue's dominated convergence theorem, the right-hand side of (\ref{eq:pcnacceptprob}) converges to $0$  for this sequence, which is a contradiction. 
Thus $\liminf C_r\ge (1-\rho)/2$. 
\end{proof}

\subsection{Ergodicity for regular varying function}\label{sec:regular}

We prove geometric ergodicity in terms of regularly varying property. For introductory literature to regularly varying functions we refer the reader to the books \citet{MR1015093,MR2364939}. 
The theory of regularly varying functions provides a framework for heavy-tail analysis. 
For one dimensional case, a positive function $h(r)$ on $(0,\infty)$ is called 
regularly varying if $\lim_{r\rightarrow\infty}h(rx)/h(r)=\lambda(x)$
for some positive finite valued function $\lambda$. We consider multidimensional  version. 
We denote $a(r,x)\xrightarrow{ucp}a(x)\ (r\rightarrow\infty)$ if for any $x\in \mathbb{R}^d\backslash\{0\}$ there exists a compact set $K\ni x$ such that $\lim_{r\rightarrow\infty}\sup_{y\in K}|a(r,y)-a(y)|=0$. 

\begin{definition}
Positive valued function $h(x)\ (x\in\mathbb{R}^d)$ is called symmetrically regularly varying if
\begin{align*}
\frac{h(rx)}{h(r1)} \xrightarrow{ucp}\lambda(x)\ (r\rightarrow\infty)
\end{align*}
for some $\lambda:\mathbb{R}\rightarrow (0,\infty)$ such that $\lambda(x)=1$ for any $x\in S^{d-1}$ where $1=(1,\ldots, 1)\in\mathbb{R}^d$. 
%If $\lambda$ only depends on $\|x\|$, we call that $h(x)$ is symmetrically %regularly varying function. 
\end{definition}

This class includes many functions such as polynomial target densities considered in \citet{MR2396939} (Section 3.3). 
This class inherits useful properties  from one dimensional regularly varying function: 
 $\lambda(x)=\|x\|^{-\alpha}$ for the \textbf{exponent of variation}   $-\alpha\in\mathbb{R}$ (p277 of \citet{MR2364939}). 
 Note that symmetricity of $\lambda(x)$ is crucial in our proof. It is not obvious to construct a  simple sufficient condition for geometric ergodicity for non-symmetric case. 

Assume that  $\Pi$ has the density $\pi(x)$. Before stating the main result of this section we prove simple lemma for integrability of the regularly varying function.

\begin{lemma}\label{lem:regular}
	If $\int \|x\|^\delta \Pi(\dif x)<\infty$ for some $\delta>0$, then 
	the exponent of variation  $-\alpha$ of the symmetrically regularly varying function $\pi(x)$ satisfies
	$\alpha>d$. 
\end{lemma}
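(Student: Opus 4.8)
The plan is to exploit the fact that a symmetrically regularly varying density with exponent $-\alpha$ behaves, in an integrated sense, like $\|x\|^{-\alpha}$ for large $\|x\|$, and then observe that the moment assumption $\int\|x\|^\delta\Pi(\dif x)<\infty$ forces $\|x\|^{\delta-\alpha}$ to be integrable at infinity on $\mathbb{R}^d$, which requires $\alpha-\delta>d$, hence $\alpha>d$. Concretely, I would first pass to polar coordinates: write $x=r\theta$ with $r=\|x\|>0$ and $\theta\in S^{d-1}$, so that $\dif x = r^{d-1}\,\dif r\,\sigma(\dif\theta)$ for the surface measure $\sigma$ on the sphere. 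The key is to control $\int_{\|x\|\ge 1}\|x\|^\delta\pi(x)\,\dif x$ from below by something proportional to $\int_1^\infty r^{\delta+d-1}\,\frac{\pi(r\theta)}{\pi(r1)}\cdot\pi(r1)\,\dif r$ averaged over $\theta$, and to understand the one-dimensional radial profile $g(r):=\pi(r1)$.

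**Next I would** invoke the one-dimensional regular variation theory, which the excerpt explicitly imports: since $\lambda(x)=\|x\|^{-\alpha}$, the radial function $g(r)=\pi(r1)$ is regularly varying at infinity with index $-\alpha$ (note $\pi(r1)/\pi(r'1)\to (r/r')^{-\alpha}$ as both go to infinity, by composing the limits). A standard fact (Karamata-type / Potter bounds, e.g.\ from \citet{MR1015093}) is that for any $\eta>0$ there is $r_0$ such that $g(r)\ge c\,r^{-\alpha-\eta}$ for $r\ge r_0$, and more to the point $\int_{r_0}^\infty r^{\beta-1}g(r)\,\dif r$ diverges whenever $\beta-\alpha\ge 0$ and converges only in the range $\beta<\alpha$. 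Combining this with the locally uniform convergence $\pi(r\theta)/\pi(r1)\to\|\theta\|^{-\alpha}=1$ on the compact set $S^{d-1}$ (so $\pi(r\theta)\ge \tfrac12 g(r)$ for all $\theta$ and all large $r$), I get
\begin{align*}
\int_{\|x\|\ge r_0}\|x\|^\delta\pi(x)\,\dif x \;\ge\; \frac{\sigma(S^{d-1})}{2}\int_{r_0}^\infty r^{\delta+d-1}g(r)\,\dif r.
\end{align*}
Finiteness of the left side then forces $\int_{r_0}^\infty r^{(\delta+d)-1}g(r)\,\dif r<\infty$, and by the radial regular-variation dichotomy this requires $\delta+d<\alpha$, in particular $\alpha>d$.

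**The main obstacle** is the bookkeeping around the one-dimensional regular variation lemma: I must make sure the convergence $\pi(r1)/\pi(r'1)\to (r/r')^{-\alpha}$ really holds (it follows from applying the defining locally uniform limit at the point $x=r/r'\cdot 1$ after rescaling, but one has to be a little careful that $1\in\mathbb{R}^d$ has $\|1\|=\sqrt d\ne 1$, so $\lambda(1)=d^{-\alpha/2}$, not $1$ — the normalization by $\pi(r1)$ absorbs this), and then quote the correct direction of the Karamata integrability criterion: $\int^\infty r^{\beta-1}L(r)r^{-\alpha}\,\dif r<\infty$ iff $\beta<\alpha$ for slowly varying $L$. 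An alternative, slightly cleaner route that avoids Potter bounds: note that $\pi(x)\,\dif x$ being a finite measure already gives, by the same polar decomposition and the uniform lower bound $\pi(r\theta)\ge\tfrac12 g(r)$, that $\int_{r_0}^\infty r^{d-1}g(r)\,\dif r<\infty$, so $g$ is integrable against $r^{d-1}$; a regularly varying function of index $-\alpha$ is integrable against $r^{d-1}$ at infinity iff $\alpha>d$ — but here we only need the ``only if'' direction, which is immediate since if $\alpha\le d$ then $r^{d-1}g(r)$ is regularly varying of index $\ge -1$ and hence non-integrable. This second route does not even use $\delta$, which suggests the sharper statement $\alpha\ge d$ is automatic and the extra moment $\delta$ upgrades it to the strict inequality $\alpha>d$; I would present whichever of these is shortest and flag that the $\delta$-moment is exactly what rules out the boundary case $\alpha=d$.
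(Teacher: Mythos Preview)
Your argument is correct and essentially coincides with the paper's proof. The paper passes to polar coordinates by forming the spherical average $h(r)=\int_{S^{d-1}}\pi(r\xi)r^{d-1}\,\dif\xi$, verifies via the locally uniform convergence that $h$ is regularly varying of index $d-1-\alpha$, and then invokes Potter bounds to deduce $\delta+d\le\alpha$ from $\int_1^\infty r^\delta h(r)\,\dif r<\infty$; you instead bound $\pi(r\theta)\ge\tfrac12\pi(r1)$ uniformly on $S^{d-1}$ and work with the single radial profile $g(r)=\pi(r1)$ before integrating, which is the same mechanism in a slightly different order. One small point: the Karamata dichotomy only yields $\delta+d\le\alpha$ (the boundary index $-1$ can go either way depending on the slowly varying part), not the strict inequality you wrote, but since $\delta>0$ this still gives $\alpha>d$, exactly as in the paper.
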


\begin{proof}
	For $x>0$, let $h(r)=\int_{\xi\in S^{d-1}}\pi(r\xi)r^{d-1}\dif\xi$. 
	Then $h(r)$ is a regularly varying function with exponent of variation $-\alpha+d-1$ by local uniform convergence property since
	\begin{align*}
		\frac{h(rs)}{h(r)}=\frac{h(rs)/\pi(r1)}{h(r)/\pi(r1)}=\frac{\int_{\xi\in S^{d-1}}\pi(rs\xi)/\pi(r1)(rs)^{d-1}\dif\xi}{\int_{\xi\in S^{d-1}}\pi(r\xi)/\pi(r1)r^{d-1}\dif\xi}\rightarrow \frac{\int_{\xi\in S^{d-1}}\lambda(s\xi)(rs)^{d-1}\dif \xi}{\int_{\xi\in S^{d-1}}\lambda(\xi)r^{d-1}\dif \xi}=s^{-\alpha+d-1}\ (r\rightarrow\infty). 
	\end{align*}
	Therefore, by Potter bounds (Theorem 1.5.6 (iii) of \citet{MR1015093}), if $\beta-\alpha+d-1>-1$, then $\int_1^\infty r^\beta h(r)\dif r=\infty$.
	Since $\int_1^\infty r^\delta h(r)\dif r=\int_{\|x\|>1} \|x\|^\delta\Pi(\dif x)<\infty$, 
	we have $\delta-\alpha+d-1\le -1$. Thus $\delta+d\le \alpha$, and hence $d<\alpha$. 
\end{proof}

\begin{proposition}
If $\pi(x)$ is symmetrically regularly varying function, then RWM kernel and pCN kernel do not have geometric ergodicity. 	
\end{proposition}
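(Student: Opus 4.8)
The plan is to treat the two kernels separately, in each case deriving a contradiction from the necessary conditions established in Section~\ref{sec:necessary}. For the RWM kernel I would use Corollary~\ref{cor:rwm}: if it were geometrically ergodic there would be $\delta>0$ with $\int\exp(\delta\|x\|)\Pi(\dif x)<\infty$, and I want to show this fails for every symmetrically regularly varying $\pi$. Reusing the radial reduction from the proof of Lemma~\ref{lem:regular}, put $h(r)=\int_{\xi\in S^{d-1}}\pi(r\xi)r^{d-1}\dif\xi$; this is a one-dimensional regularly varying function with exponent $-\alpha+d-1$, and a change to polar coordinates (as in that proof) gives
\begin{align*}
	\int_{\|x\|>1}\exp(\delta\|x\|)\Pi(\dif x)=\int_1^\infty \exp(\delta r)h(r)\dif r .
\end{align*}
Since $\exp(\delta r)\ge c_\beta r^{\beta}$ on $[1,\infty)$ for a suitable constant $c_\beta>0$ and every real $\beta$, and since the Potter bounds (exactly as invoked in the proof of Lemma~\ref{lem:regular}) give $\int_1^\infty r^\beta h(r)\dif r=\infty$ once $\beta>\alpha-d$, the right-hand side is infinite, contradicting Corollary~\ref{cor:rwm}. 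Note this does not require knowing the sign of $\alpha$.

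For the pCN kernel I would argue from Corollary~\ref{cor:pcn}: it suffices to show $\liminf_{r\to\infty}C_r<(1-\rho)/2$, where $C_r=r^{-2}\sup_{\rho r\le\|x\|,\|y\|\le r}|\log\pi(x)-\log\pi(y)|$, and in fact I expect to prove $C_r\to 0$. Writing $x=rz$ and $y=rz'$ with $z,z'$ in the closed annulus $K=\{z:\rho\le\|z\|\le 1\}$, we have $\log\pi(x)-\log\pi(y)=\log(\pi(rz)/\pi(r1))-\log(\pi(rz')/\pi(r1))$. The convergence $\pi(rz)/\pi(r1)\xrightarrow{ucp}\lambda(z)=\|z\|^{-\alpha}\ (r\to\infty)$ upgrades to uniform convergence on $K$ because $K$ is compact and avoids the origin, so it is covered by finitely many of the compact neighbourhoods appearing in the definition of $\xrightarrow{ucp}$; hence $\sup_{z\in K}|\log(\pi(rz)/\pi(r1))-\log\lambda(z)|\to 0$. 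Since $\log\lambda(z)=-\alpha\log\|z\|$ is bounded on $K$, the double supremum defining $r^2C_r$ remains bounded as $r\to\infty$, so $C_r\to0<(1-\rho)/2$, contradicting Corollary~\ref{cor:pcn}.

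The only slightly delicate point is the upgrade of the locally uniform convergence $\xrightarrow{ucp}$ to genuine uniform convergence on the annulus $K$, but this is a routine compactness argument since $0\notin K$; once that is in hand, the polar identity, the Potter-bound estimate, and the boundedness of $\log\lambda$ on $K$ are all routine.
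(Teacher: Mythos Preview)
Your proposal is correct and mirrors the paper's own proof almost step for step: for RWM you reduce to the radial integral $\int_1^\infty e^{\delta r}h(r)\dif r$ and invoke Potter bounds (the paper cites the same theorem directly without the intermediate polynomial lower bound), and for pCN you rescale onto the annulus $\{\rho\le\|z\|\le1\}$, use the triangle inequality via $\log(\pi(rz)/\pi(r1))$, and conclude $r^2C_r$ stays bounded so $C_r\to0$---exactly the paper's computation. The only addition is that you spell out the finite-subcover argument upgrading $\xrightarrow{ucp}$ to uniform convergence on the compact annulus, which the paper leaves implicit.
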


\begin{proof}œ
Let $h(r)$ be as in the previous  lemma. 
	Then $h(r)$ is a regularly varying function and hence $\int e^{s\|x\|}\Pi(\dif x)=\int e^{sr}h(r)\dif r=+\infty$ (Theorem 1.5.6 (iii) of \citet{MR1015093}). 
	Hence RWM kernel does not have geometric ergodicity by Corollary \ref{cor:rwm}. 
	
Next we consider pCN kernel. By local uniform convergence property, 
\begin{align*}
	r^2C_r&=\sup_{\rho\le\|x\|,\|y\|\le 1}|\log\pi(rx)-\log\pi(ry)|\\
	&=\sup_{\rho\le\|x\|,\|y\|\le 1}\left|\left(\log\frac{\pi(rx)}{\pi(r1)}-\log\lambda(x)\right)-\left(\log\frac{\pi(ry)}{\pi(r1)}-\log\lambda(y)\right)+\left(\log\lambda(x)-\log\lambda(y)\right)\right|\\
	&\le2\sup_{\rho\le\|x\|\le 1}\left|\log\frac{\pi(rx)}{\pi(r1)}-\log\lambda(x)\right|
	+\sup_{\rho\le\|x\|,\|y\|\le 1}\left|\log\lambda(x)-\log\lambda(y)\right|\\
	&\rightarrow\sup_{\rho\le\|x\|,\|y\|\le 1}\left|\log\lambda(x)-\log\lambda(y)\right|<\infty\ (t\rightarrow\infty). 
\end{align*}
Thus $C_r=o(1)$, and hence pCN kernel does not have geometric ergodicity by Corollary \ref{cor:pcn}. 
\end{proof}

By Corollary \ref{cor:mpcn}, the MpCN kernel is geometrically ergodic only if $\Pi$ has a polynomial tail. The following proposition states the converse. 

\begin{proposition}\label{pro:ergreg}
Assume $\pi(x)$ is strictly positive continuous symmetrically regularly varying function. 
Then the MpCN kernel is geometrically ergodic if and only if 
$\int \|x\|^\delta \Pi(\dif x)<\infty$ for some $\delta>0$. 
\end{proposition}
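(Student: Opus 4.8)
The plan is to treat the two implications separately. The ``only if'' direction is immediate: if the MpCN kernel is geometrically ergodic, then $\int\|x\|^{\delta}\Pi(\dif x)<\infty$ for some $\delta>0$ by Corollary~\ref{cor:mpcn}. So all the work is in the ``if'' direction, where I would verify a geometric drift condition~(\ref{eq:drift_condition}) together with the standard regularity assumptions, after which geometric ergodicity follows from the criterion recalled in Section~\ref{sec:ergodicity}.

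So assume $\int\|x\|^{\delta}\Pi(\dif x)<\infty$ for some $\delta>0$; by Lemma~\ref{lem:regular} the exponent of variation $-\alpha$ of $\pi$ then satisfies $\alpha>d$, and this is the only place the hypothesis is used. First I would dispose of the regularity conditions: since $\pi$ is strictly positive and continuous and the MpCN proposal $Q(x,\cdot)$ has a strictly positive Lebesgue density for every $x\neq 0$, the kernel $P$ is $\Pi$-invariant (being an MH kernel) and $\Pi$-irreducible, and every compact subset of $\mathbb{R}^{d}\setminus\{0\}$ is a small set satisfying~(\ref{eq:smallset}) with $k=1$ (recall $\Pi(\{0\})=0$). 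For the drift function I would take $V(x)=\|x\|^{s}+\|x\|^{-s}$ with $s\in(0,d)$ small, to be fixed below; the term $\|x\|^{-s}$ is genuinely needed because $P(x,\cdot)$ collapses to $\delta_{0}$ as $\|x\|\to 0$, so no neighbourhood of the origin can be a small set. Write $x=r\theta$ with $r=\|x\|$ and $\theta\in S^{d-1}$; by representation~(\ref{eq:xstar}), $x^{*}=rz$ with $z=\rho^{1/2}\theta+(1-\rho)^{1/2}w/\|\tilde{w}\|$, where $\|z\|^{2}=\|x^{*}\|^{2}/\|x\|^{2}=e^{\xi}$ and $\xi=\xi(x)$ is as in~(\ref{eq:xi}). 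By Proposition~\ref{prop:rwp} the law of $\xi$ is symmetric about $0$ and does not depend on $x$; using~(\ref{eq:mpcn_expression}) and this symmetry, $\mathbb{E}[\|z\|^{s}]=\mathbb{E}[\|z\|^{-s}]=\mathbb{E}[e^{s\xi/2}]<\infty$ precisely because $s<d$.

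The core step is the drift estimate. Since the acceptance probability is $\alpha(x,x^{*})=\min\{1,(\pi(x^{*})/\pi(x))\|z\|^{d}\}$, we have $PV(x)/V(x)=1+\mathbb{E}[\alpha(x,x^{*})(V(x^{*})/V(x)-1)]$, and the plan is to compute this limit as $\|x\|\to\infty$ and as $\|x\|\to 0$ and show both are $<1$. For $\|x\|=r\to\infty$, the locally uniform convergence $\pi(ry)/\pi(r1)\to\|y\|^{-\alpha}$ gives, for a.e.\ realisation of $z$, $\pi(x^{*})/\pi(x)=\pi(rz)/\pi(r\theta)\to\|z\|^{-\alpha}$, hence $\alpha(x,x^{*})\to\psi(\xi):=\min\{1,e^{(d-\alpha)\xi/2}\}$, while $V(x^{*})/V(x)\to e^{s\xi/2}$; since $\alpha(x,x^{*})(V(x^{*})/V(x)-1)$ is dominated, uniformly in $r$, by the integrable function $\|z\|^{s}+\|z\|^{-s}$, dominated convergence yields
\[
\lim_{\|x\|\to\infty}\frac{PV(x)}{V(x)}=1+\mathbb{E}\big[\psi(\xi)(e^{s\xi/2}-1)\big],
\]
a quantity depending only on the fixed law of $\xi$, hence not on $\theta$. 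Because $\alpha>d$ the weight $\psi$ is nonincreasing, so the map $s\mapsto\mathbb{E}[\psi(\xi)(e^{s\xi/2}-1)]$ vanishes at $s=0$ with right derivative $\tfrac12\mathbb{E}[\xi\psi(\xi)]$, which is strictly negative by the symmetry of $\xi$ together with the monotonicity of $\psi$; hence the limit is $<1$ for all sufficiently small $s$. The computation as $\|x\|\to 0$ is the mirror image: continuity of $\pi$ at the origin gives $\pi(x^{*})/\pi(x)\to 1$, so $\alpha(x,x^{*})\to\min\{1,e^{d\xi/2}\}$ while $V(x^{*})/V(x)\to e^{-s\xi/2}$, and after the substitution $\xi\mapsto-\xi$ one gets $\lim_{\|x\|\to 0}PV(x)/V(x)=1+\mathbb{E}[\min\{1,e^{-d\xi/2}\}(e^{s\xi/2}-1)]$, again $<1$ for small $s$ since $\min\{1,e^{-d\xi/2}\}$ is still nonincreasing. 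Fixing such an $s$, I obtain $\gamma<1$ and $0<\epsilon<R<\infty$ with $PV\le\gamma V$ on $\{\|x\|<\epsilon\}\cup\{\|x\|>R\}$; on the compact annulus $C=\{\epsilon\le\|x\|\le R\}$ we have $PV\le cV$ for a finite constant $c$, so $PV$ is bounded on $C$, and $C$ is a small set with $\Pi(C)>0$. This establishes~(\ref{eq:drift_condition}).

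The step I expect to be the main obstacle is making the two limit passages airtight: the interchange of limit and expectation as $\|x\|\to\infty$ must be justified only from the \emph{locally uniform} (not uniform) regular-variation convergence together with the moment bounds $\mathbb{E}[\|z\|^{\pm s}]<\infty$, which forces one to control the rare events where $\|z\|$ is very large or very small --- using Potter's bounds (Theorem~1.5.6 of \citet{MR1015093}) on $\pi(rz)/\pi(r\theta)$ in the first case and boundedness of the density of $z$ in the second --- and to check that the limiting ratio genuinely does not depend on the angle $\theta$. Once the symmetry of $\xi$ and the monotonicity of the limiting weight are in hand, the algebra showing that the limit drops below $1$ for small $s$ is short; it is essentially the same negative-correlation estimate that underlies geometric ergodicity of the RWM kernel in the light-tailed case.
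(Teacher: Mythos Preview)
Your proof is correct and follows the same overall architecture as the paper: dispatch the ``only if'' direction via Corollary~\ref{cor:mpcn}, then for the ``if'' direction verify a geometric drift condition by computing $\limsup PV/V$ separately as $\|x\|\to\infty$ and $\|x\|\to 0$, using an annulus as small set and the symmetry of $\xi$ from Proposition~\ref{prop:rwp} to conclude that each limit is strictly below $1$.

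The substantive difference is the choice of drift function. The paper takes $V(x)=q(x)^{-s}$ with $q(x)=\pi(x)\|x\|^{d}$, for any $s\in(0,1)$, rather than your $V(x)=\|x\|^{s}+\|x\|^{-s}$. This has two consequences. First, with the paper's $V$ the integrand in $(PV-V)/V$ becomes $\bigl((q(x^{*})/q(x))^{-s}-1\bigr)\min\{1,q(x^{*})/q(x)\}$, which is \emph{uniformly bounded} on $(0,\infty)$ for $s\in(0,1)$; hence the limit interchange is immediate once $q(x^{*})/q(x)$ is shown to converge in law (via Slutsky and the locally uniform regular variation), and the Potter-bound gymnastics you correctly anticipate as ``the main obstacle'' simply disappear. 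Second, the paper obtains strict negativity of the limit by an exact algebraic identity---splitting on the sign of $\eta=(d-\alpha)\xi/2$, using the symmetry of $\eta$, and collapsing the two pieces to $\mathbb{E}\bigl[(1-e^{-(1-s)\eta})(e^{-s\eta}-1);\,\eta>0\bigr]<0$---which works for every $s\in(0,1)$, whereas your derivative-at-zero argument only secures small $s$. Your route is perfectly valid and your moment bound $\mathbb{E}[\|z\|^{\pm s}]<\infty$ for $s<d$ is right; the paper's drift function just buys a shorter and more self-contained proof, trading a Lyapunov function that is geometrically natural for one that makes the analysis algebraically trivial.
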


\begin{proof}
We use expression in (\ref{eq:xstar}) and (\ref{eq:xi}). 
Let $q(x)=\pi(x)\|x\|^d$ and 
let 
$V(x)=
\left\{
\begin{array}{cl}
q(x)^{-s}&x\neq 0\\
+\infty&x=0	
\end{array}
\right.
$ 
for $s\in (0,1)$. 
Then $V(x)$ is bounded on $C=\{x;r\le \|x\|\le r^{-1}\}$ for any $r\in (0,1)$, 
and $C$ is a small set for the MpCN kernel. 
To prove (\ref{eq:drift_condition}), 
it is sufficient to show
\begin{align}\label{eq:drift}
	\limsup_{x\rightarrow 0} \frac{PV(x)-V(x)}{V(x)}<0,\ 
		\limsup_{x\rightarrow \infty} \frac{PV(x)-V(x)}{V(x)}<0. 
\end{align}
Observe that
\begin{align}\label{eq:ergregdrift}
	\frac{PV(x)-V(x)}{V(x)}=\mathbb{E}_x\left[\left\{\left(\frac{q(x^*)}{q(x)}\right)^{-s}-1\right\}\min\left\{1,\frac{q(x^*)}{q(x)}\right\}\right]
\end{align}	
and the integrand is uniformly bounded. Since $\pi(x)$ is continuous at $0$, for each $w, \tilde{w}$, 
\begin{align*}
1=\lim_{x\rightarrow 0} \frac{\pi(x^*)}{\pi(x)}=\lim_{x\rightarrow 0} \frac{q(x^*)}{q(x)}\exp\left(-\frac{d}{2}\xi(x)\right). 
\end{align*}
 Since the law of $\xi(x)$ is independent of $x$, we simply write $\xi$ for $\xi(x)$. 
 Then by Slutsky's theorem, $q(x^*)/q(x)$ converges in law to $\exp(d\xi/2)$ as $x\rightarrow 0$. 
Therefore we have
\begin{align*}
	\lim_{x \rightarrow 0}\frac{PV(x)-V(x)}{V(x)}&=\mathbb{E}\left[\left\{e^{-\frac{ds}{2}\xi}-1\right\}\min\left\{1,e^{\frac{d}{2}\xi}\right\}\right]\\
	&=\mathbb{E}\left[\left\{e^{-\frac{ds}{2}\xi}-1\right\},\xi>0\right]
	+\mathbb{E}\left[e^{\frac{d(1-s)}{2}\xi}\left\{1-e^{\frac{ds}{2}\xi}\right\},\xi<0\right]. 
	\end{align*}
	By Proposition \ref{prop:rwp}, the law of $\xi$ is symmetric about the origin. Therefore the above expectation equals to 
	\begin{align*}
		\mathbb{E}\left[\left\{e^{-\frac{ds}{2}\xi}-1\right\},\xi>0\right]
	+\mathbb{E}\left[e^{-\frac{d(1-s)}{2}\xi}\left\{1-e^{-\frac{ds}{2}\xi}\right\},\xi>0\right]
			&=\mathbb{E}\left[\left(1-e^{-\frac{d(1-s)}{2}\xi}\right)\left\{e^{-\frac{ds}{2}\xi}-1\right\},\xi>0\right]<0 
\end{align*}	
since $\mathbb{P}(\xi>0)>0$ and the integrand is negative for any $\xi>0$. Thus the first part of (\ref{eq:drift}) is completed. 

Now we consider the second part of (\ref{eq:drift}). Let $\eta(x)=(d-\alpha)\xi(x)/2$. Then for each $w, \tilde{w}$, 
 \begin{align*}
\frac{q(x^*)}{q(x)}\exp\left(-\eta(x)\right)
 	 	=\frac{\pi(x^*)}{\pi(x)}\exp\left(\frac{\alpha}{2}\xi(x)\right)
 	 	=\frac{\pi\left(\|x\|\frac{x^*}{\|x\|}\right)/\pi(\|x\|\cdot 1)}{\pi\left(\|x\|\frac{x}{\|x\|}\right)/\pi(\|x\|\cdot 1)}\exp\left(\frac{\alpha}{2}\xi(x)\right).
 \end{align*}
 By local uniform convergence of the regular varying function, the right-hand side of the above converges to $1$. Since the law of $\eta(x)$ does not depend on $x$, we simply denote it by $\eta$. Thus
 as in the first part of (\ref{eq:drift}), $q(x^*)/q(x)$ converges in law to $\exp(\eta)$ by Slutsky's  theorem as $x\rightarrow\infty$. Hence 
\begin{align*}
	\lim_{x \rightarrow \infty}\frac{PV(x)-V(x)}{V(x)}&=\mathbb{E}\left[\left\{e^{-s\eta}-1\right\}\min\left\{1,e^{\eta}\right\}\right]\\
	&=\mathbb{E}\left[\left\{e^{-s\eta}-1\right\},\eta>0\right]
	+\mathbb{E}\left[e^{(1-s)\eta}\left\{1-e^{s\eta}\right\},\eta< 0\right]. 
 \end{align*}
Since the distribution of $\eta$ is symmetric, the above expectation equals to 
\begin{align*}
		\mathbb{E}\left[\left\{e^{-s\eta}-1\right\},\eta>0\right]
	+\mathbb{E}\left[e^{-(1-s)\eta}\left\{1-e^{-s\eta}\right\},\eta> 0\right]=\mathbb{E}\left[\left(1-e^{-(1-s)\eta}\right)\left\{e^{-s\eta}-1\right\},\eta>0\right]. 
\end{align*}	
Since the integrand is negative 
if $\eta>0$, the claim follows if $\mathbb{P}(\eta>0)>0$. 
Since $\mathbb{P}(\eta\neq 0)=2\mathbb{P}(\eta>0)$, 
we have geometric ergodicity if $\mathbb{P}(\eta=0)<1$. 
However $\mathbb{P}(\eta=0)=1$ is satisfied if and only if $\alpha=d$, which contradicts the assumption by Lemma \ref{lem:regular}. Thus the claim follows. 
\end{proof}

\subsection{Ergodicity for rapidly varying function}\label{sec:rapid}

In this section we illustrate ergodic property for the MpCN kernel for light-tailed target distributions.  We show that the MpCN kernel is geometric ergodicity for any light-tailed target distribution as long as the curvature condition (\ref{eq:curvature}) is satisfied. On the other hand, as in Corollaries \ref{cor:rwm} and \ref{cor:pcn},  super-exponential tail is necessary for the RWM kernel and the pCN kernel. %Since RWM kernel and pCN can be geometrically ergodic in this case, the result in this section may not be not surprising. But these result show that the MpCN is uniformly well behaved. 
To sate the main result, we need a definition for light-tailed distributions. %We denote $a(t,x)\xrightarrow{ucp}a(x)\ (t\rightarrow\infty)$ if $\lim_{t\rightarrow\infty}\sup_{x\in K}|a(t,x)-a(x)|=0$ for any compact set $K$ of $\mathbb{R}^d\backslash S^{d-1}$. 
%We denote $a(t,x,y)\xrightarrow{ucp}a(x,y)\ (t\rightarrow\infty)$ if for any $x,y\in \mathbb{R}^d\backslash\{0\}$ there exists a compact set $K_x\ni x, K_y\ni y$ such that $\lim_{t\rightarrow\infty}\sup_{u\in K_x,v\in K_y}|a(t,u,v)-a(u,v)|=0$. 

\begin{definition}
A positive function $h(x)\ (x\in\mathbb{R}^d)$ is rapidly varying if 
for any $\xi\in S^{d-1}$, $s>0$, 
\begin{align*}
	\lim_{r\rightarrow\infty}\frac{h(rs\xi)}{h(r\xi)}=
	\left\{\begin{array}{cc}
	0&\mathrm{if}\ 1<s\\
	+\infty&\mathrm{if}\ 1>s
	\end{array}
\right.
\end{align*}
\end{definition}

Many light-tailed functions are rapidly varying. 
For example, super-exponential densities in \citet{MR1731030} (Section 4) and exponential densities in \citet{FM2} (Assumption D) are rapidly varying functions. 
See also \citet{MR3097969} for other examples.  %We will assume curvature condition later (\ref{eq:curvature}). 

\begin{proposition}\label{prop:mpcn_suffice}
If $\pi(x)$ is a  continuous strictly positive rapidly varying function, MpCN kernel is geometrically ergodic
if and only if $\mathop{\mbox{$\Pi$-$\mathrm{ess}\sup$}} P(x,\{x\})<1$. 
\end{proposition}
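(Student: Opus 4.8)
The plan is to treat the two implications separately. The ``only if'' direction is immediate: if the MpCN kernel is geometrically ergodic then $\Pi$-$\mathrm{ess}\sup P(x,\{x\})<1$ by Proposition~\ref{pro:pcn} (the MpCN kernel satisfies all the standing assumptions). For the converse I would verify the geometric drift condition (\ref{eq:drift_condition}), reusing the set-up of the proof of Proposition~\ref{pro:ergreg}: take the drift function $V(x)=q(x)^{-s}$ with $q(x)=\pi(x)\|x\|^{d}$ and $s\in(0,1)$ to be fixed small, recall that the annuli $C=\{x:r\le\|x\|\le r^{-1}\}$ are small sets for the MpCN kernel, and recall from (\ref{eq:ergregdrift}) that
\begin{align*}
\frac{PV(x)-V(x)}{V(x)}=\mathbb{E}_x\!\left[g\!\left(\tfrac{q(x^*)}{q(x)}\right)\right],\qquad g(t)=(t^{-s}-1)\min\{1,t\},
\end{align*}
where $g$ is bounded, $g\ge 0$ on $(0,1]$ with $\kappa_s:=\sup_{(0,1)}g=O(s)\to 0$ as $s\downarrow0$, and $g(t)=t^{-s}-1\le 0$ on $[1,\infty)$. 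So it suffices to establish the two half-line limits in (\ref{eq:drift}), writing $T=q(x^*)/q(x)$ from now on.

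The behaviour as $x\to 0$ is copied verbatim from the first half of the proof of Proposition~\ref{pro:ergreg}: continuity and strict positivity of $\pi$ at the origin give $T\to e^{d\xi/2}$ in law as $x\to0$, and the symmetry of $\xi$ (Proposition~\ref{prop:rwp}) together with bounded convergence yields $\limsup_{x\to0}\tfrac{PV(x)-V(x)}{V(x)}<0$. This part uses neither rapid variation nor the $\mathrm{ess}\sup$ hypothesis.

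The content is the limit as $x\to\infty$. Writing $x=r\theta$ with $r=\|x\|$, $\theta\in S^{d-1}$, the proposal is $x^*=r\zeta$ with $\zeta=\rho^{1/2}\theta+(1-\rho)^{1/2}w/\|\tilde w\|$, the law of $\|\zeta\|$ being independent of $\theta$ and satisfying $\|\zeta\|\neq 1$ a.s., so $T=\tfrac{\pi(r\zeta)}{\pi(r\theta)}\|\zeta\|^{d}$. From the shape of $g$,
\begin{align*}
\mathbb{E}_x[g(T)]\;\le\;\kappa_s+\mathbb{E}_x\!\left[(T^{-s}-1)1_{\{T\ge 1\}}\right]\;\le\;\kappa_s-\tfrac12\,\mathbb{P}_x\!\left(T\ge 2^{1/s}\right),
\end{align*}
so it is enough to produce $\delta_0>0$ (independent of $s$) such that $\liminf_{\|x\|\to\infty}\mathbb{P}_x(T\ge M)\ge\delta_0$ for every fixed $M<\infty$, and then fix $s$ small enough that $\kappa_s<\delta_0/4$ and take $M=2^{1/s}$. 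To get this lower bound I would combine the two hypotheses. The $\mathrm{ess}\sup$ hypothesis supplies $\epsilon>0$ with $\mathbb{E}_x[\min\{1,T\}]=1-P(x,\{x\})\ge\epsilon$ for $\Pi$-a.e.\ $x$, hence --- since $\pi$ is continuous and strictly positive and the proposal has a continuous density, so $x\mapsto 1-P(x,\{x\})$ is continuous --- for every $x$. Rapid variation is then used to show the acceptance probability cannot be carried by proposals of merely moderate density ratio: for every fixed $M<\infty$ and $\delta>0$, $\mathbb{P}_x(\delta\le T\le M)\to 0$ as $\|x\|\to\infty$, because moving from $r\theta$ to $r\zeta$ with $\|\zeta\|\neq 1$ sends $q(x^*)/q(x)$ to $0$ or $+\infty$; consequently $\mathbb{E}_x[\min\{1,T\}1_{\{T<M\}}]\to 0$, and with $\mathbb{E}_x[\min\{1,T\}1_{\{T\ge M\}}]\le\mathbb{P}_x(T\ge M)$ this forces $\mathbb{P}_x(T\ge M)\ge\epsilon/2$ for all large $\|x\|$, i.e.\ $\delta_0=\epsilon/2$.

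The main obstacle is precisely the step ``$\mathbb{P}_x(\delta\le T\le M)\to 0$ uniformly as $\|x\|\to\infty$'': the definition of a rapidly varying function is stated only ray by ray, so one must upgrade it to a statement uniform over the directions $\theta\in S^{d-1}$ (and over the randomness of $\zeta$), which I expect to require the compactness of $S^{d-1}$, the continuity and strict positivity of $\pi$, the uniform-convergence upgrade of rapid variation (for $s$ bounded away from $1$), and some care with the random radial factor $\|\zeta\|^{d}$ near its essential extremes and near $\|\zeta\|=1$. A secondary point to dispatch along the way is that $V$ is a legitimate drift function --- finite off a small set, and with sublevel sets contained in annuli --- which again uses the mild uniform decay $\sup_{\theta\in S^{d-1}}q(r\theta)\to 0$ of $q$ at infinity coming from rapid variation together with continuity. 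Note that, unlike the light-tailed part of the main theorem, this proposition does not invoke the curvature condition; the curvature condition would enter only afterwards, to certify that $\Pi$-$\mathrm{ess}\sup P(x,\{x\})<1$ actually holds.
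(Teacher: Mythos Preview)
Your outline is correct and coincides with the paper's approach in its skeleton: same drift function $V=q^{-s}$ with $q(x)=\pi(x)\|x\|^{d}$, same small sets (annuli), same treatment of $x\to 0$, and the same recognition that the crux at $x\to\infty$ is the escape of $T=q(x^*)/q(x)$ from compact subintervals of $(0,\infty)$. There are two points of genuine difference worth noting.

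\textbf{Organisation of the limit at infinity.} You bound the positive part of $g(T)=(T^{-s}-1)\min\{1,T\}$ by $\kappa_s=O(s)$ and then have to tune $s$ small against a uniform lower bound on $\mathbb{P}_x(T\ge M)$. The paper avoids this bookkeeping: since $t\mapsto t^{-s}\min\{1,t\}$ is bounded and tends to $0$ at both $0$ and $+\infty$, the escape property $\mathbb{P}_x(|\log T|\le C)\to 0$ alone forces $\mathbb{E}_x[T^{-s}\min\{1,T\}]\to 0$. Hence the whole drift ratio converges to $-\mathbb{E}_x[\min\{1,T\}]=-(1-P(x,\{x\}))$, which is uniformly negative by the $\mathrm{ess}\sup$ hypothesis (upgraded to a pointwise bound via continuity of $\pi$, exactly as you do). This works for every $s\in(0,1)$, with no $\kappa_s$ to balance.

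\textbf{The ``main obstacle''.} You anticipate having to upgrade ray-by-ray rapid variation to a statement uniform over $\theta\in S^{d-1}$, invoking compactness and a uniform-convergence version of rapid variation. The paper sidesteps this entirely, and the trick is worth absorbing. It notes that $x^*/\|x\|=\rho^{1/2}n(x)+(1-\rho)^{1/2}w/\|\tilde w\|$ has a \emph{bounded density} (a shifted multivariate Cauchy), so $\mathbb{P}_x(x^*/\|x\|\in A)\le c\,\mathrm{Leb}(A)$ for all $A$. Thus it suffices that the level set $A(x)=\{y:|\log q(\|x\|y)-\log q(x)|\le C,\ \delta\le\|y\|\le\delta^{-1}\}$ has vanishing Lebesgue measure. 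Passing to polar coordinates and applying dominated convergence over $\xi\in S^{d-1}$ reduces to showing the one-dimensional slice $A(x,\xi)=\{r\in[\delta,\delta^{-1}]:|\log q(\|x\|r\xi)-\log q(x)|\le C\}$ has vanishing length for each fixed $\xi$. That is a purely radial statement, and it follows directly from ray-by-ray rapid variation via the inclusion $A(x,\xi)\times A(x,\xi)\subset\{(r,s):|\log q(\|x\|r\xi)-\log q(\|x\|s\xi)|\le 2C\}$, whose off-diagonal points are eventually excluded. No uniformity over the sphere is needed anywhere. Your route would also work, but this density-bound argument is both shorter and uses only the hypotheses as stated.
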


%The class includes many light-tailed distribution such as the super exponential distribution with the curvature condition discussed in 
%\cite{MR1731030}. 

\begin{proof}
The only if part follows from Proposition \ref{pro:pcn}. 
The proof wil be finished once we show (\ref{eq:rapid}) in Proposition \ref{pro:ergreg} if $\mathop{\mbox{$\Pi$-$\mathrm{ess}\sup$}} P(x,\{x\})<1$. We only show the latter inequality in (\ref{eq:rapid}) since the proof for the former inequality is exactly the same as that of Proposition \ref{pro:ergreg}.
Thanks to $\lim_{x\rightarrow \pm\infty}e^{-sx}\min\{1,e^x\}=0$ if 
\begin{align}\label{eq:rapid}
\lim_{x\rightarrow\infty}\mathbb{P}_x\left[\left|\log \frac{q(x^*)}{q(x)}\right|\le C\right]	=0
\end{align}
for any $C>0$, then 
\begin{align*}
\lim_{x\rightarrow\infty}\mathbb{E}_x\left[\left\{\frac{q(x^*)}{q(x)}\right\}^{-s}\min\left\{1,\frac{q(x^*)}{q(x)}\right\}\right]=0. 
\end{align*}
Therefore by the expression (\ref{eq:ergregdrift}), the equation (\ref{eq:rapid}) implies
\begin{align*}
\limsup_{x\rightarrow\infty}\frac{PV(x)-V(x)}{V(x)}=-\liminf_{x\rightarrow\infty}\mathbb{E}_x\left[\min\left\{1,\frac{q(x^*)}{q(x)}\right\}\right]
=\limsup_{x\rightarrow\infty}P(x,\{x\})-1=\mathop{\mbox{$\Pi$-$\mathrm{ess}\sup$}} P(x,\{x\})-1
\end{align*}
where the last equality comes from continuity of $\pi(x)$. 
Thus (\ref{eq:rapid}) will complete the proof. 

By Proposition \ref{prop:rwp}, the law of $\|x^*\|/\|x\|=\exp(\xi/2)$ does not depend on $x$. Therefore there exists $\delta>0$
for each $\epsilon>0$ such that
\begin{align*}
\mathbb{P}_x\left[\left|\log \frac{q(x^*)}{q(x)}\right|\le C\right]&\le 
\mathbb{P}_x\left[\left|\log \frac{q(x^*)}{q(x)}\right|\le C,\delta\le \frac{\|x^*\|}{\|x\|}\le \delta^{-1}\right]
+
\mathbb{P}_x\left[\left\{\delta\le \frac{\|x^*\|}{\|x\|}\le \delta^{-1}\right\}^c\right]\\
&\le
\mathbb{P}_x\left[\frac{x^*}{\|x\|}\in A(x)\right]
+
\epsilon
\end{align*}
where 
\begin{align*}
A(x)=\left\{y\in\mathbb{R}^d;\left|\log \frac{q(\|x\|y)}{q(x)}\right|\le C,\delta\le \|y\|\le \delta^{-1}\right\}. 
\end{align*}
By expression (\ref{eq:xstar}), $x^*/\|x\|$ follows the multivariate Cauchy distribution
with shift $\rho^{1/2}x/\|x\|$ and scale $1-\rho$. Thus the probability distribution function is uniformly bounded, and hence there exists $c>0$ such that
	\begin{align}\label{eq:xstarparx}
	\mathbb{P}_x\left[\frac{x^*}{\|x\|}\in A(x)\right]\le c\mathrm{Leb}(A(x))
	= c\int_{\xi\in S^{d-1}}\int_{r\in A(x,\xi)}r^{d-1}\dif r\dif\xi
		\end{align}
		where 
\begin{align*}
A(x,\xi)=\left\{r>0;\left|\log \frac{q(\|x\|r\xi)}{q(x)}\right|\le C,\delta\le r\le \delta^{-1}\right\}. 
\end{align*}
By dominated convergence theorem, (\ref{eq:xstarparx}) tends to $0$ if $\lim_{x\rightarrow\infty}\mathrm{Leb}(A(x,\xi))=0$
for each $\xi\in S^{d-1}$. 
Note that
\begin{align*}
A(x,\xi)\times A(x,\xi)&= \left\{r,s>0;\left|\log \frac{q(\|x\|r\xi)}{q(x)}\right|, \left|\log \frac{q(\|x\|s\xi)}{q(x)}\right|\le C,\delta\le r,s\le \delta^{-1}\right\}\\
&\subset
\left\{r,s>0;\left|\log \frac{q(\|x\|r\xi)}{q(\|x\|s\xi)}\right|\le 2C,\delta\le r,s\le \delta^{-1}\right\}. 
\end{align*}
However by rapidly  varying property of $\pi(x)$, 
\begin{align*}
\liminf_{x\rightarrow\infty}\left|\log \frac{q(\|x\|r\xi)}{q(\|x\|s\xi)}\right|
\ge \liminf_{x\rightarrow\infty}\left|\log \frac{\pi(\|x\|r\xi)}{\pi(\|x\|s\xi)}\right|
-\left|\log \frac{(\|x\|r)^d}{(\|x\|s)^d}\right|\rightarrow\infty	
\end{align*}
for each $\delta\le r,s,\le\delta^{-1}$ such that $r\neq s$, and hence 
$\mathrm{Leb}(A(x,\xi))=\sqrt{\mathrm{Leb}(A(x,\xi)\times A(x,\xi))}\rightarrow 0$. 
Thus the claim follows. 
\end{proof}

We state the main result in this section. 
The  curvature condition considered in \cite{MR1731030}  is the sufficient condition for geometric ergodicity for the MpCN kernel. The proof follows a similar line of argument to \cite{MR1731030}, proof of Theorem 4.3. 

\begin{corollary}
	If $\pi(x)$ is  differentiable and strictly positive rapidly varying function and if
	\begin{align}\label{eq:curvature}
		\limsup_{x\rightarrow\infty}\left\langle\frac{x}{\|x\|},\frac{\nabla\log\pi(x)}{\|\nabla\log\pi(x)\|}\right\rangle <0
	\end{align}
	then MpCN kernel is geometrically ergodic. 
\end{corollary}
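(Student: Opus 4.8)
The plan is to invoke Proposition~\ref{prop:mpcn_suffice}. Since $\pi$ is differentiable it is in particular continuous and strictly positive, and it is assumed rapidly varying, so that proposition applies and it remains only to show that $\mathop{\mbox{$\Pi$-$\mathrm{ess}\sup$}}P(x,\{x\})<1$. With $q(x)=\pi(x)\|x\|^d$ as in Proposition~\ref{pro:ergreg} (the proposal kernel has a density, so $Q(x,\{x\})=0$), one has $1-P(x,\{x\})=\mathbb{E}_x[\min\{1,q(x^*)/q(x)\}]$, which is continuous in $x$ by dominated convergence and strictly positive for every $x\neq0$; using the expression (\ref{eq:xstar}) one checks moreover that it tends to $\mathbb{E}[\min\{1,\|y\|^d\}]>0$ as $x\to0$, where $y=\rho^{1/2}(x/\|x\|)+(1-\rho)^{1/2}z$ and $z=w/\|\tilde w\|$ has a law not depending on $x$, so it is bounded below by a positive constant on every bounded neighbourhood of the origin. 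Hence the claim reduces to
\begin{align*}
\liminf_{\|x\|\to\infty}\mathbb{E}_x\left[\min\left\{1,\frac{q(x^*)}{q(x)}\right\}\right]>0.
\end{align*}

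To prove this I would run a cone argument in the spirit of \cite{MR1731030}, proof of Theorem~4.3. By (\ref{eq:curvature}) fix $\epsilon_0\in(0,1)$ and $R>0$ with $\langle e,n(x)\rangle\le-\epsilon_0$ whenever $\|x\|\ge R$, where $e=x/\|x\|$ and $n(x)=\nabla\log\pi(x)/\|\nabla\log\pi(x)\|$. Since $u\mapsto\langle u,n(x)\rangle$ is linear and $\|n(x)\|=1$, every unit vector $u$ with $\|u+e\|\le\epsilon_0/2$ then satisfies $\langle u,\nabla\log\pi(x)\rangle\ge\tfrac{\epsilon_0}{2}\|\nabla\log\pi(x)\|\ge0$; that is, $\log\pi$ is nondecreasing at $x$ in every direction lying in a fixed cone about the inward radial direction $-e$. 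Next, by (\ref{eq:xstar}) we have $x^*=\|x\|(\rho^{1/2}e+(1-\rho)^{1/2}z)$ with $z=w/\|\tilde w\|$ spherically symmetric, of everywhere-positive density (the multivariate Cauchy law appearing in the proof of Proposition~\ref{prop:mpcn_suffice}) and law not depending on $x$. Put $c_4=\rho^{1/2}-(1-\rho)^{1/2}>0$ (recall $\rho=0.8$). On the event $\{z\in B_{\delta_0}(-e)\}$ one has $x^*=\|x\|(c_4e+(1-\rho)^{1/2}z')$ with $\|z'\|\le\delta_0$, so for $\delta_0$ small and $\|x\|$ large: the whole segment $[x,x^*]$ stays at norm $\ge\tfrac{c_4}{2}\|x\|\ge R$, its (constant) direction stays within $\epsilon_0/2$ of $-p/\|p\|$ at every point $p$ of the segment, and $\|x^*\|/\|x\|\ge c_4-(1-\rho)^{1/2}\delta_0=:c_3\in(0,1)$; all three hold uniformly in $e\in S^{d-1}$ and $\|x\|\ge R'$ because the relevant quantities are invariant under $x\mapsto\lambda x$ and under rotations. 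Integrating $\tfrac{\dif}{\dif t}\log\pi$ along the segment with the cone bound gives $\pi(x^*)\ge\pi(x)$, hence on this event
\begin{align*}
\frac{q(x^*)}{q(x)}=\frac{\pi(x^*)}{\pi(x)}\left(\frac{\|x^*\|}{\|x\|}\right)^{d}\ge c_3^{d}>0.
\end{align*}

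Finally, by spherical symmetry of $z$ the number $p_1:=\mathbb{P}(z\in B_{\delta_0}(-e))$ is a positive constant independent of $e$, so
\begin{align*}
\mathbb{E}_x\left[\min\left\{1,\frac{q(x^*)}{q(x)}\right\}\right]\ge p_1c_3^{d}>0
\end{align*}
for all sufficiently large $\|x\|$. Together with the lower bound on bounded sets this yields $\mathop{\mbox{$\Pi$-$\mathrm{ess}\sup$}}P(x,\{x\})<1$, and Proposition~\ref{prop:mpcn_suffice} then delivers geometric ergodicity. Note that rapid variation is used only to license Proposition~\ref{prop:mpcn_suffice}; the acceptance estimate itself rests entirely on the curvature condition and the spread of the proposal noise.

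The step deserving the most care is the geometric claim that for $z\in B_{\delta_0}(-e)$ the segment $[x,x^*]$ stays, uniformly over all large $\|x\|$ and all $e\in S^{d-1}$, in the region $\{\|\cdot\|\ge R\}$ while keeping its direction inside the inward cone of half-angle $\sim\epsilon_0/2$ — this is what allows the pointwise curvature bound to be integrated. Everything reduces to the case $z=-e$ (equivalently $z'=0$), where the segment is exactly the radial segment from $x$ to $c_4x$ and its direction is exactly $-e$; one must then quantify how large $\delta_0$ may be taken so that the perturbation spoils neither property, and the scaling and rotational invariance of the picture is precisely what makes the admissible $\delta_0$ independent of $x$ and $e$. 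A secondary, purely technical point is the implicit requirement in (\ref{eq:curvature}) that $\nabla\log\pi(x)\neq0$ for $\|x\|$ large so that $n(x)$ is defined; wherever $\nabla\log\pi$ happens to vanish the directional derivative of $\log\pi$ is zero and the monotonicity argument is unaffected.
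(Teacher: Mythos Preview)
Your argument is correct and follows the same overall line as the paper: reduce to Proposition~\ref{prop:mpcn_suffice} and verify $\mathop{\mbox{$\Pi$-$\mathrm{ess}\sup$}}P(x,\{x\})<1$ via a Jarner--Hansen cone argument showing that the proposal lands, with probability bounded away from zero, in an inward wedge on which $\pi$ increases. The one substantive difference is how the factor $\|x\|^d$ in $q$ is handled. The paper first bounds $\mathbb{P}_x\bigl(\pi(x^*)/\pi(x)>1\bigr)$ from below and then invokes the rapid-variation estimate (\ref{eq:rapid}) to transfer this to a lower bound on $\mathbb{P}_x\bigl(q(x^*)/q(x)>1\bigr)$; you instead observe that on your good event $\|x^*\|/\|x\|\ge c_3$ already, so $q(x^*)/q(x)\ge c_3^d$ directly, and hence $\mathbb{E}_x[\min\{1,q(x^*)/q(x)\}]\ge p_1 c_3^d$. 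Your route is cleaner and, as you note, uses rapid variation only to license Proposition~\ref{prop:mpcn_suffice}.

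One small caveat: your choice of centring the good event at $z=-e$ forces $c_4=\rho^{1/2}-(1-\rho)^{1/2}>0$, i.e.\ $\rho>1/2$, which is why you fall back on ``recall $\rho=0.8$''. The statement should hold for all $\rho\in(0,1)$; the easy fix is to centre the event at $z=0$ instead, so that on $\{\|z\|<\delta_0\}$ one has $x^*\approx\rho^{1/2}x$ and the radial contraction works for every $\rho$. The paper's wedge $W(x)$ avoids this issue because it is parametrised by a short step $t\in[0,\epsilon^2/4]$ in a direction near $n(x)$, which always contains $x^*$ with positive probability regardless of $\rho$.
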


\begin{proof}
	Let $n(x)=x/\|x\|$ and $m(x)=\nabla\log \pi(x)/\|\nabla\log \pi(x)\|$. 
	By assumption, there exists $\epsilon\in (0,1)$ and $M>0$ such that
	\begin{align*}
		\left\langle n(x),m(x)\right\rangle< -2\epsilon	
	\end{align*}
	for all $\|x\|\ge M$. 
	Let
	\begin{align*}
		W(x)=\{x-t\|x\|\xi;\ 0\le t\le \epsilon^2/4, \xi\in S^{d-1}, \|\xi-n(x)\|\le\epsilon\}. 
	\end{align*}
	We first prove that 
	\begin{align}\label{eq:curvature_w}
		\left(y\in W(x)\ \mathrm{and}\ \|x\|\ge 2M \right) \Rightarrow \log \frac{\pi(y)}{\pi(x)}>0.
	\end{align}
	If $y=x-t\|x\|\xi\in W(x)$ %we have $\langle n(x),\xi\rangle=(2-\|\xi-n(x)\|^2)/2>0$ and 4t -(2t+2t^2)=2t-2t^2=2t(1-t)
	then
	\begin{align*}
		\|n(x)-n(y)\|^2&=2-2\langle n(x),n(y)\rangle=2-2\left\langle n(x),\frac{n(x)-t\xi}{\|n(x)-t\xi\|}\right\rangle\\
		&\le 2-2\frac{1-t}{1+t}=\frac{4t}{1+t}\le 4t\le \epsilon^2
	\end{align*}
	Since $1-t\ge 1-\epsilon^2/4\ge 1/2$, for $\|x\|\ge 2M$ we have $\|y\|=\left\|x-t\left\|x\right\|\xi\right\|\ge (1-t)\|x\|\ge M$. Then for $y\in W(x)$ and $\|x\|\ge 2M$, 
	\begin{align*}
	\langle\xi,m(y)\rangle 	&=\langle (\xi-n(x))+(n(x)-n(y))+n(y),m(y)\rangle\\
	&\le \|\xi-n(x)\|+\|n(x)-n(y)\|+\langle n(y),m(y)\rangle<0. 
	\end{align*}
	Hence (\ref{eq:curvature_w}) holds, since if $y\in W(x)$ and $\|x\|\ge 2M$, then 
	\begin{align*}
	\log\frac{\pi(y)}{\pi(x)}=\log \pi(y)-\log \pi(x)=-\|x\|\int_0^t\left\langle \xi,\nabla\log \pi(x-s\|x\|\xi)\right\rangle\dif s>0. 
	\end{align*}
		Next we prove
		\begin{align}\label{eq:curvature_q}
	\liminf_{x\rightarrow\infty} \mathbb{P}_x\left(\log\frac{\pi(x^*)}{\pi(x)}>0\right)\le 
		\liminf_{x\rightarrow\infty} \mathbb{P}_x\left(\log\frac{q(x^*)}{q(x)}>0\right). 
	\end{align}
	By the RWM property of the MpCN kernel (Proposition \ref{prop:rwp}), the law of $\xi(x)/2=\log(\|x^*\|/\|x\|)$ does not depend on $x$. Therefore
	for any $\delta>0$ there exists $C>0$ such that
	\begin{align*}
	\mathbb{P}_x\left(d\log\frac{\|x^*\|}{\|x\|}\le -C\right)	<\delta. 
	\end{align*}
 	Therefore by (\ref{eq:rapid})
	\begin{align*}
	\liminf_{x\rightarrow\infty} \mathbb{P}_x\left(\log\frac{\pi(x^*)}{\pi(x)}>0\right)&=
	\liminf_{x\rightarrow\infty} \mathbb{P}_x\left(\log\frac{q(x^*)}{q(x)}>d\log\frac{\|x^*\|}{\|x\|}\right)\\
	&\le
		\liminf_{x\rightarrow\infty} \mathbb{P}_x\left(\log\frac{q(x^*)}{q(x)}>-C\right)+\mathbb{P}_x\left(d\log\frac{\|x^*\|}{\|x\|}\le -C\right)\\
	&\le
		\liminf_{x\rightarrow\infty} \mathbb{P}_x\left(\log\frac{q(x^*)}{q(x)}>0\right)+\delta.		
	\end{align*}
	Hence (\ref{eq:curvature_q}) is proved. 
	Thus for $\|x\|\ge 2M$, by (\ref{eq:curvature_w}) and (\ref{eq:curvature_q}), 
		\begin{align*}
P(x,\{x\}^c)&=\mathbb{E}_x\left[\min\left\{1,\frac{q(x^*)}{q(x)}\right\}\right]\\
&\ge \mathbb{P}_x\left(\log\frac{q(x^*)}{q(x)}>0\right)\\
&\ge \mathbb{P}_x\left(\log\frac{\pi(x^*)}{\pi(x)}>0\right)+o(\|x\|)\\
&\ge \mathbb{P}_x\left(x^*\in W(x)\right)+o(\|x\|). 
	\end{align*}
Observe that $UW(x)=\{Uy; y\in W(x)\}=W(Ux)$ for any unitary matrix $U$, and $x\in W(x)\Leftrightarrow e\in W(e)$ for $e=x/\|x\|$. By these facts
		\begin{align*}
\mathbb{P}_x\left(x^*\in W(x)\right)=
\mathbb{P}_e\left(x^*\in W(e)\right)>0
	\end{align*}
for any $e\in S^{d-1}$. 
Thus $\liminf_{x\rightarrow\infty}P(x,\{x\}^c)=\mathop{\mbox{$\Pi$-$\mathrm{ess}\sup$}} P(x,\{x\})-1\ge \mathbb{P}_e\left(x^*\in W(e)\right)>0$. Thus the claim follows by Proposition \ref{prop:mpcn_suffice}. 
\end{proof}

%\section{Remarks}\label{sec:discussion}
%
%\begin{itemize}
%	\item It is worth to note that there is  \textbf{curse of dimensionality} for MH kernels. Assume simple position dependent variance  MH algorithm on $\mathbb{R}^d$ 
%that proposes $x^*\leftarrow x+\|x\|w$ where $w\sim N_d(0,I_d)$ and write $\{X_n\}_n$ for a Markov chain generated by this MCMC. 
%As $d\rightarrow\infty$, we have $\mathbb{P}(\|x^*\|<\|x\|)\rightarrow 0$. This implies $\mathbb{P}(\|X_1\|<\|X_0\|)\rightarrow 0$. Under stationarity, since MH kernel is $\Pi$-reversible, we have $\mathbb{P}(\|X_1\|>\|X_0\|)=\mathbb{P}(\|X_1\|<\|X_0\|)\rightarrow 0$. Thus the rejection probability
%$\mathbb{P}(\|X_1\|=\|X_0\|)\rightarrow 1$. Although this is just a high-dimensional property it probably affects ergodic property and surely affects practical behaviours of MCMC in high-dimension. See Theorem 3.1 of \citet{arXiv:1412.6231} for curse of dimensionality for the pCN kernel. 
%\end{itemize}

%Note that  $\pi(x)=\exp(-s\|x\|^\alpha)$ satisfies (\ref{eq:curvature}) but 
%RWM and pCN kernels does not have geometric ergodicity for $\alpha\le 1$ and for $\alpha<2$ with respectively. 

\section*{Acknowledgement}
The author would like to extend thanks to  Alexandros Beskos for interesting discussions related to the similarity of MpCN and MMALA algorithms. 
I also thank to  Masayuki Uchida for fruitful discussions related to practical implementation of MpCN for inference of discretely observed stochastic diffusion process. 

\bibliographystyle{plainnat}

\end{document}